\newtheorem{thm}{Theorem}
\newtheorem{theorem}{Theorem}[section]
\newtheorem{lemma}[theorem]{Lemma}
\newtheorem{proposition}{Proposition}
\newtheorem{corollary}{Corollary}
\begin{document}

\preprint{APS/123-QED}

\title{Quantum Machine Learning Beyond Kernel Methods}

\author{Sofiene Jerbi}
%\email{sofiene.jerbi@uibk.ac.at}
\affiliation{Institute for Theoretical Physics, University of Innsbruck, Technikerstr. 21a, A-6020 Innsbruck, Austria}
\author{Lukas J. Fiderer}
\affiliation{Institute for Theoretical Physics, University of Innsbruck, Technikerstr. 21a, A-6020 Innsbruck, Austria}
\author{Hendrik Poulsen Nautrup}
\affiliation{Institute for Theoretical Physics, University of Innsbruck, Technikerstr. 21a, A-6020 Innsbruck, Austria}
\author{Jonas M. K\"{u}bler}
\affiliation{Max Planck Institute for Intelligent Systems, T\"{u}bingen, Germany}
\author{\\Hans J. Briegel}
\affiliation{Institute for Theoretical Physics, University of Innsbruck, Technikerstr. 21a, A-6020 Innsbruck, Austria}
\author{Vedran Dunjko}
\affiliation{Leiden University, Niels Bohrweg 1, 2333 CA Leiden, Netherlands}

\date{\today}% It is always \today, today,
             %  but any date may be explicitly specified

\begin{abstract}
Machine learning algorithms based on parametrized quantum circuits are prime candidates for near-term applications on noisy quantum computers. In this direction, various types of quantum machine learning models have been introduced and studied extensively. Yet, our understanding of how these models compare, both mutually and to classical models, remains limited. In this work, we identify a constructive framework that captures all standard models based on parametrized quantum circuits: that of linear quantum models. In particular, we show using tools from quantum information theory how data re-uploading circuits, an apparent outlier of this framework, can be efficiently mapped into the simpler picture of linear models in quantum Hilbert spaces. Furthermore, we analyze the experimentally-relevant resource requirements of these models in terms of qubit number and amount of data needed to learn. Based on recent results from classical machine learning, we prove that linear quantum models must utilize exponentially more qubits than data re-uploading models in order to solve certain learning tasks, while kernel methods additionally require exponentially more data points. Our results provide a more comprehensive view of quantum machine learning models as well as insights on the compatibility of different models with NISQ constraints.
\end{abstract}

%\keywords{Suggested keywords}%Use showkeys class option if keyword
                              %display desired
\maketitle

\section{Introduction}

In the current Noisy Intermediate-Scale Quantum (NISQ) era \cite{preskill18}, a few methods have been proposed to construct useful quantum algorithms that are compatible with mild hardware restrictions \cite{cerezo20,bharti21}. Most of these methods involve the specification of a quantum circuit Ansatz, optimized in a classical fashion to solve specific computational tasks. Next to variational quantum eigensolvers in chemistry \cite{peruzzo14} and variants of the quantum approximate optimization algorithm \cite{farhi14}, machine learning approaches based on such parametrized quantum circuits \cite{benedetti19} stand as some of the most promising practical applications to yield quantum advantages.

In essence, a supervised machine learning problem often reduces to the task of fitting a parametrized function -- also referred to as the machine learning model -- to a set of previously labeled points, called a training set. Interestingly, many problems in physics and beyond, from the classification of phases of matter \cite{carrasquilla17} to predicting the folding structures of proteins \cite{jumper21}, can be phrased as such machine learning tasks. In the domain of quantum machine learning \cite{biamonte17,dunjko18}, an emerging approach for this type of problem is to use parametrized quantum circuits to define a hypothesis class of functions \cite{schuld20,farhi18,liu18,zhu19,skolik21,jerbi21}. The hope is for these parametrized models to offer representational power beyond what is possible with classical models, including the highly successful deep neural networks. And indeed, we have substantial evidence of such a quantum learning advantage for artificial problems \cite{liu20,du20,sweke20,huang20,jerbi21,huang21}, but the next frontier is to show that quantum models can be advantageous in solving real-world problems as well.\break Yet, it is still unclear which of these models we should preferably use in practical applications. To bring quantum machine learning models forward, we first need a deeper understanding of their learning performance guarantees and the actual resource requirements they entail.

\begin{figure}[!t]
\begin{center}
\includegraphics[width=\linewidth]{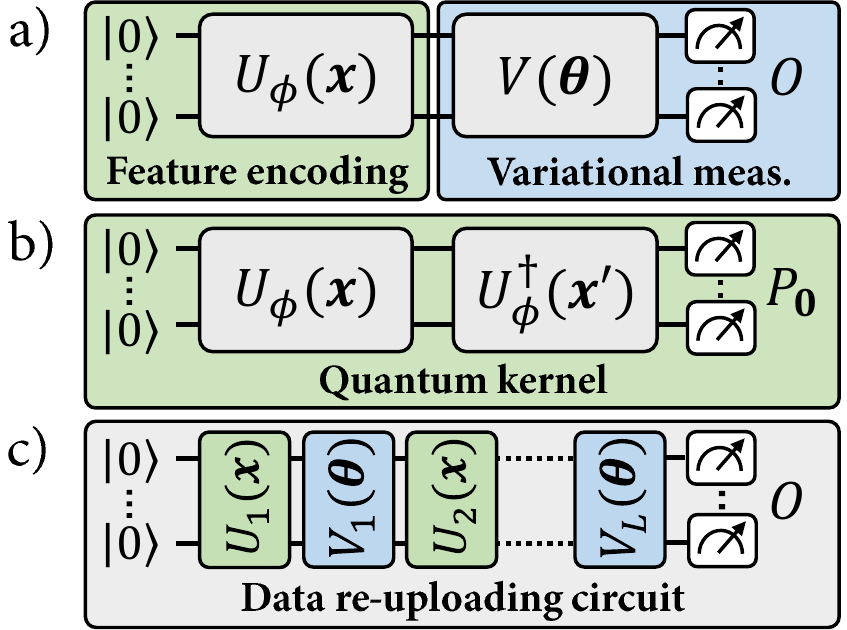}
\caption{The quantum machine learning models studied in this work. a) An explicit quantum model, where the label of a data point $\bm{x}$ is specified by the expectation value of a variational measurement on its associated quantum feature state $\rho(\bm{x})$. b) The quantum kernel associated to these quantum feature states. The expectation value of the projection $P_{\bm{0}}=\ket{\bm{0}}\!\!\bra{\bm{0}}$\break corresponds to the inner product between $\rho(\bm{x})$ and $\rho(\bm{x'})$. An implicit quantum model is defined by a linear combination of such inner products, for $\bm{x}$ an input point and $\bm{x'}$ training data points. c) A data re-uploading model, interlaying data encoding and variational unitaries before a final measurement.}
\label{fig:quantum-models}
\end{center}
\vspace{-2em}
\end{figure}

\begin{figure*}
\begin{center}
\includegraphics[width=0.77\linewidth]{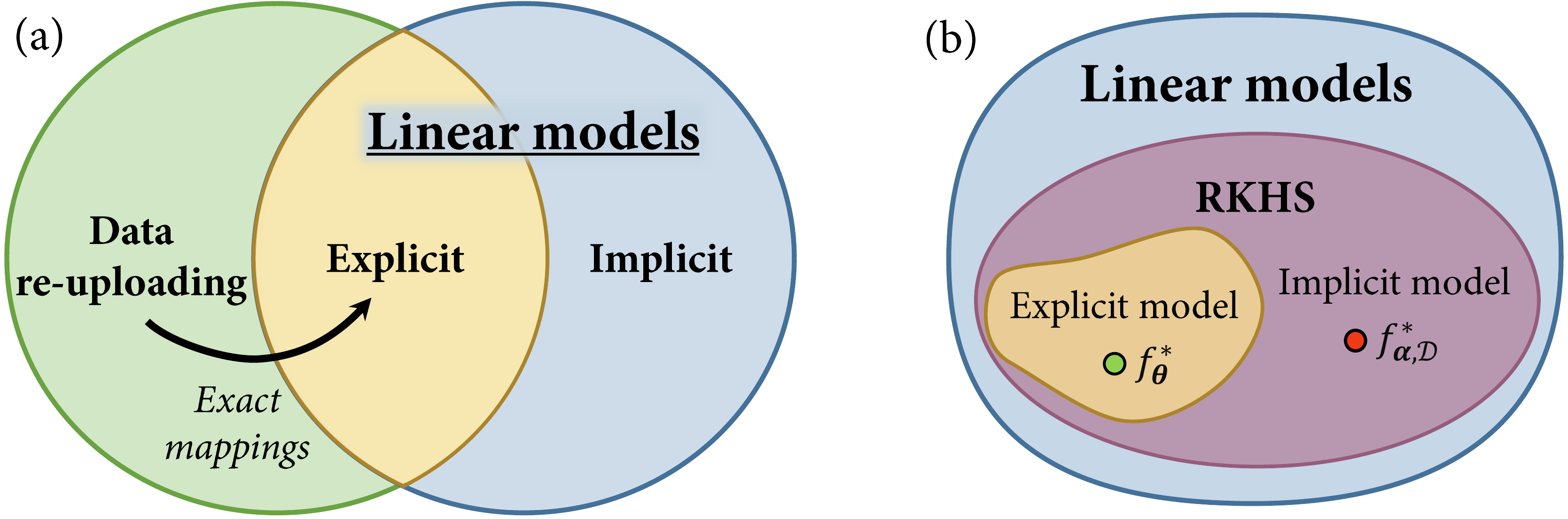}
\caption{The model families in quantum machine learning. (a) While data re-uploading models are by definition a generalization of linear quantum models, our exact mappings (see Sec.~\ref{sec:linear-reup}) demonstrate that any polynomial-size data re-uploading model can be realized by a polynomial-size explicit linear model. (b) Kernelizing an explicit model corresponds to turning its observable into a linear combination of feature states $\rho(\bm{x})$, for $\bm{x}$ in a dataset $\mathcal{D}$. The representer theorem (see Sec.~\ref{sec:rkhs-representer}) guarantees that, for any dataset $\mathcal{D}$, the implicit model $f^*_{\bm{\alpha},\mathcal{D}}$ minimizing the training loss associated to $\mathcal{D}$ outperforms any explicit minimizer $f^*_{\bm{\theta}}$ from the same Reproducing Kernel Hilbert Space (RKHS) \emph{with respect to this same training loss}. However, depending on the feature encoding $\rho(\cdot)$ and the data distribution, a restricted dataset $\mathcal{D}$ may cause the implicit minimizer $f^*_{\bm{\alpha},\mathcal{D}}$ to severely overfit on the dataset and have dramatically worse generalization performance than $f^*_{\bm{\theta}}$ (see Sec.~\ref{sec:explicit-vs-implicit} and \ref{sec:num-comp}).}
\label{fig:summary-results_1}
\end{center}
\vspace{-2em}
\end{figure*}

This is precisely where the main contribution of our work lies. In this paper, we analyze the relations between the different quantum models currently proposed in the literature, and uncover clear indications on which models to use in practice, in light of experimentally-relevant restrictions such as the number of qubits and quantum circuit evaluations needed to learn.

Previous works have made strides in this direction by exploiting a connection between some quantum models and kernel methods from classical machine learning \cite{scholkopf02}. Many quantum models indeed operate by encoding data in a high-dimensional Hilbert space and using solely inner products evaluated in this feature space to model properties of the data. This is also how kernel methods work. Building on this similarity, the authors of \cite{havlivcek19,schuld19} noted that a given quantum encoding can be used to define two types of models (see Fig.~\ref{fig:quantum-models}): (a) \emph{explicit} quantum models, where an encoded data point is measured according to a variational observable that specifies its label, or\break (b) \emph{implicit} kernel models, where weighted inner products of encoded data points are used to assign labels instead. In the quantum machine learning literature, much emphasis has been placed on implicit models \cite{schuld21,lloyd20,huang20,kubler21,peters21,haug21,bartkiewicz20,kusumoto21}, in part due to a fundamental result known as the representer theorem \cite{scholkopf02}. This result shows that implicit models can always achieve a smaller labeling error than explicit models, when evaluated on the same training set. Seemingly, this suggests that implicit models are systematically more advantageous than their explicit counterparts in solving machine learning tasks \cite{schuld21}. This idea also inspired a line of research where, in order to evaluate the existence of quantum advantages, classical models were only compared to quantum kernel methods. This restricted comparison led to the conclusion that classical models could be competitive with (or outperform) quantum models, even in tailored quantum problems \cite{huang20}.

In recent times, there has also been progress in so-called \emph{data re-uploading} models \cite{perez20} which have demonstrated their importance in designing expressive models, both analytically \cite{schuld21b} and empirically \cite{perez20,jerbi21,skolik21}, and proving that (even single-qubit) parametrized quantum circuits are universal function approximators \cite{perez21,goto20}. Through their alternation of data-encoding and variational unitaries, data re-uploading models can be seen as a generalization of explicit models. However, this generalization also breaks the correspondence to implicit models, as a given data point $\bm{x}$ no longer corresponds to a \emph{fixed} encoded point $\rho(\bm{x})$. Hence, these observations suggest that data re-uploading models are strictly more general than explicit models and that they are incompatible with the kernel-model paradigm. Until now, it remained an open question whether some advantage could be gained from data re-uploading models, in light of the guarantees of kernel methods.

In this work, we introduce a unifying framework for explicit, implicit and data re-uploading quantum models (see Fig.~\ref{fig:summary-results_1}). We show that all function families stemming from these can be formulated as linear models in suitably-defined quantum feature spaces. This allows us to systematically compare explicit and data re-uploading models to their kernel formulations. We find that, while kernel models are guaranteed to achieve a lower training error, this improvement can come at the cost of a poor generalization performance outside the training set. Our results indicate that the advantages of quantum machine learning may lie beyond kernel methods, more specifically in explicit and data re-uploading models.
To corroborate this theory, we quantify the resource requirements of these different quantum models in terms of the number of qubits and data points needed to learn. We show the existence of a regression task with exponential separations between each pair of quantum models, demonstrating the practical advantages of explicit models over implicit models, and of data re-uploading models over explicit models. From an experimental perspective, these separations shed light on the resource efficiency of different quantum models, which is of crucial importance for near-term applications in quantum machine learning.

\section{A unifying framework for quantum learning models}

In this section, we start by reviewing the notion of linear quantum models and explain how explicit and implicit models are by definition linear models in quantum feature spaces. We then present data re-uploading models and show how, despite being defined as a generalization of explicit models, they can also be realized by linear models in larger Hilbert spaces.\vspace{-0.5em}

\subsection{Linear quantum models}

Let us first understand how explicit and implicit quantum models can both be described as linear quantum models \cite{schuld21,gyurik21}. To define both of these models, we first consider a feature encoding unitary $U_{\phi}:\mathcal{X}\rightarrow\mathcal{F}$ that maps input vectors $\bm{x} \in \mathcal{X}$, e.g., images in $\mathbb{R}^d$, to $n$-qubit quantum states $\rho(\bm{x}) = U_{\phi}(\bm{x}) \ket{\bm{0}}\!\!\bra{\bm{0}} U_{\phi}^\dagger(\bm{x})$ in the Hilbert space $\mathcal{F}$ of $2^n\times2^n$ Hermitian operators.

A linear function in the quantum feature space $\mathcal{F}$ is defined by the expectation values
\begin{equation}\label{eq:linear-model}
    f(\bm{x}) = \text{Tr}[\rho(\bm{x})O],
\end{equation}
for some Hermitian observable $O\in\mathcal{F}$. Indeed, one can see from Eq.~(\ref{eq:linear-model}) that $f(\bm{x})$ is the Hilbert-Schmidt inner product between the Hermitian matrices $\rho(\bm{x})$ and $O$,\break which is by definition a linear function of the form $\langle\phi(\bm{x}), w\rangle_{\mathcal{F}}$, for $\phi(\bm{x}) = \rho(\bm{x})$ and $w = O$. In a regression task, these real-valued expectation values are used directly to define a labeling function, while in a classification task, they are post-processed to produce discrete labels (using, for instance, a sign function).

Explicit and implicit models differ in the way they define the family of observables $\{O\}$ they each consider.\vspace{-1.5em}

\subsubsection{Explicit models}

An explicit quantum model \cite{havlivcek19,schuld19} using the feature encoding $U_{\phi}(\bm{x})$ is defined by a variational family of unitaries $V(\bm{\theta})$ and a fixed observable $O$, such that
\begin{equation}\label{eq:explicit-model}
    f_{\bm{\theta}}(\bm{x}) = \text{Tr}[\rho(\bm{x})O_{\bm{\theta}}],
\end{equation}
for $O_{\bm{\theta}} = V(\bm{\theta})^\dagger O V(\bm{\theta})$, specify its labeling function. Restricting the family of variational observables $\{O_{\bm{\theta}}\}_{\bm{\theta}}$ is equivalent to restricting the vectors $w$ accessible to the linear quantum model $f(\bm{x})\!=\!\langle\phi(\bm{x}), w\rangle_{\mathcal{F}},\ w\!\in\!\mathcal{F}$, associated to the encoding $\rho(\bm{x})$.\vspace{-1.5em}

\subsubsection{Implicit models}

Implicit quantum models \cite{havlivcek19,schuld19} are constructed from the quantum feature states $\rho(\bm{x})$ in a different way. Their definition depends directly on the data points $\{\bm{x}^{(1)}, \ldots, \bm{x}^{(M)}\}$ in a given training set $\mathcal{D}$, as they take the form of a linear combination
\begin{equation}\label{eq:kernel-model}
    f_{\bm{\alpha},\mathcal{D}}(\bm{x}) = \sum_{m=1}^{M} \alpha_m k(\bm{x},\bm{x}^{(m)}),
\end{equation}
for $k(\bm{x},\bm{x}^{(m)}) = \langle \phi(\bm{x}),\phi(\bm{x}^{(m)})\rangle_{\mathcal{F}} = \text{Tr}[\rho(\bm{x})\rho(\bm{x}^{(m)})]$ the \emph{kernel function} associated to the feature encoding $U_\phi(\bm{x})$. By linearity of the trace however, we can express any such implicit model as a linear model in $\mathcal{F}$, defined by the observable:
\begin{equation}\label{eq:implicit-obs}
    O_{\bm{\alpha},\mathcal{D}} = \sum_{m=1}^{M} \alpha_m \rho(\bm{x}^{(m)}).
\end{equation}
Therefore, both explicit and implicit quantum models belong to the general family of linear models in the quantum feature space $\mathcal{F}$.

\subsection{Linear realizations of data re-uploading models\label{sec:linear-reup}}

Data re-uploading models \cite{perez20} on the other hand do not naturally fit this formulation. These models generalize explicit models by increasing the number of encoding layers $U_\ell(\bm{x}), 1\leq \ell \leq L$ (which can be all distinct), and interlaying them with variational unitaries $V_\ell(\bm{\theta})$. This results in expectation-value functions of the form:
\begin{equation}\label{eq:dr-model}
 f_{\bm{\theta}}(\bm{x}) = \text{Tr}[\rho_{\bm{\theta}}(\bm{x})O_{\bm{\theta}}],
\end{equation}
for a \emph{variational} encoding $\rho_{\bm{\theta}}(\bm{x})\! =\! U(\bm{x},\bm{\theta}) \ket{\bm{0}}\!\!\bra{\bm{0}} U^\dagger(\bm{x},\bm{\theta})$,\break where $U(\bm{x},\bm{\theta}) = U_L(\bm{x})\prod_{\ell=1}^{L-1} V_\ell(\bm{\theta}) U_\ell(\bm{x})$, and a variational observable $O_{\bm{\theta}} = V_L(\bm{\theta})^\dagger O V_L(\bm{\theta})$.
Given that the unitaries $U_\ell(\bm{x})$ and $V_{\ell'}(\bm{\theta})$ do not commute in general, one cannot straightforwardly gather \emph{all} trainable gates in a final variational observable $O'_{\bm{\theta}} \in \mathcal{F}$ as to obtain a linear model $\tilde{f}_{\bm{\theta}}(\bm{x})=\langle\phi(\bm{x}), O'_{\bm{\theta}}\rangle_{\mathcal{F}}$ with a \emph{fixed} quantum feature encoding $\phi(\bm{x})$.
Our first contribution is to show that, by augmenting the dimension of the Hilbert space $\mathcal{F}$ (i.e., considering circuits that act on a larger number of qubits), one can construct such explicit linear realizations $\tilde{f}_{\bm{\theta}}$ of data re-uploading models. That is, given a family of data re-uploading models $\{f_{\bm{\theta}}(\cdot) = \text{Tr}[\rho_{\bm{\theta}}(\cdot)O_{\bm{\theta}}]\}_{\bm{\theta}}$, we can construct an equivalent family of explicit models $\{\tilde{f}_{\bm{\theta}}(\cdot) = \text{Tr}[\rho'(\cdot)O'_{\bm{\theta}}]\}_{\bm{\theta}}$ that represents all functions in the original family, along with an efficient procedure to map the former models to the latter.

\subsubsection{Approximate mapping}

\begin{figure}[t]
\begin{center}
\includegraphics[width=0.92\linewidth]{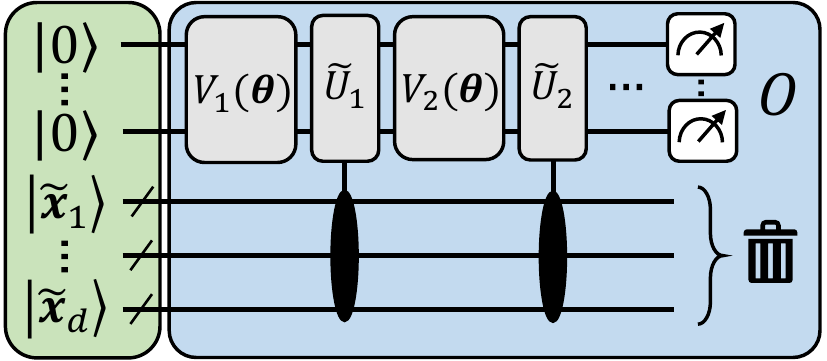}
\caption{An illustrative explicit model approximating a data re-uploading circuit. The circuit acts $n$ working qubits and $dp$ encoding qubits. Pauli-X rotations encode bit-string descriptions $\widetilde{\bm{x}}_i \in \{0,1\}^p$ of the $d$ input components $x_i \in \mathbb{R}$, which constitutes the feature encoding of the explicit model. Fixed and data-independent controlled rotations, interlaid with arbitrary variational unitaries, and a final measurement of the working qubits can result in a good approximation of any parametrized quantum circuit acting on $n$ qubits.}
\label{fig:construction}
\end{center}
\vspace{-2em}
\end{figure}

Before getting to the main result of this section (Theorem \ref{thm:exact-mapping}), we first present an illustrative construction to convey intuition on how mappings from data re-uploading to explicit models can be realized. This construction, depicted in Fig.~\ref{fig:construction}, leads to \emph{approximate} mappings, meaning that these only guarantee $|\tilde{f}_{\bm{\theta}}(\bm{x}) - f_{\bm{\theta}}(\bm{x})| \leq \delta,$ $\forall \bm{x}, \bm{\theta}$ for some (adjustable) error of approximation $\delta$. More precisely, we have:

\begin{proposition}\label{prop:approx-mapping}
Given an arbitrary data re-uploading model $f_{\bm{\theta}}(\bm{x}) = \textnormal{Tr}[\rho_{\bm{\theta}}(\bm{x})O_{\bm{\theta}}]$ as specified by Eq.~(\ref{eq:dr-model}), and an approximation error $\delta>0$, there exists a mapping that produces an explicit model $\tilde{f}_{\bm{\theta}}(\bm{x}) = \textnormal{Tr}[\rho'(\bm{x})O'_{\bm{\theta}}]$ as specified by Eq.~(\ref{eq:explicit-model}), such that:
\begin{equation}
|\textnormal{Tr}[\rho'(\bm{x})O'_{\bm{\theta}}] - \textnormal{Tr}[\rho_{\bm{\theta}}(\bm{x})O_{\bm{\theta}}]|\leq \delta,\ \forall \bm{x}, \bm{\theta}.
\end{equation}
For $D$ the number of encoding gates used by the data re-uploading model and $\norm{O}_\infty$ the spectral norm of its observable, the explicit model uses $\mathcal{O}(D\log(D\norm{O}_{\infty}\delta^{-1}))$ additional qubits and gates.
\end{proposition}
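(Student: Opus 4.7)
\medskip\noindent\textbf{Proof sketch.}
The plan is to simulate each data-encoding gate of the re-uploading circuit by a data-independent controlled rotation that reads off a fixed-precision bit-string description of the relevant input component from a dedicated ancilla register, so that the only $\bm{x}$-dependent operation in the whole enlarged circuit is the preparation of those bit-strings. First I would reduce, without loss of generality, to the case in which each of the $D$ encoding gates has the single-parameter form $U_\ell(\bm{x}) = e^{-i x_{i(\ell)} H_\ell}$ with Hermitian $H_\ell$ satisfying $\|H_\ell\|_\infty \leq 1$; any bounded encoding unitary can be compiled into such exponentials at the price of absorbing the compilation overhead into $D$.

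Second, for each $\ell = 1,\ldots,D$ I would introduce a fresh $p$-qubit register $R_\ell$ and let the fixed ($\bm{\theta}$-independent) feature encoding $\rho'(\bm{x})$ be the preparation, via Pauli-$X$ rotations, of the product computational-basis state $\bigotimes_\ell \ket{\widetilde{x}_{i(\ell)}}$, where $\widetilde{x}_{i(\ell)} \in \{0,1\}^p$ is a $p$-bit truncation of the corresponding component of $\bm{x}$. In the position previously occupied by $U_\ell(\bm{x})$ I would insert a data-independent block $C_\ell$ consisting of $p$ rotations $e^{-i 2^{-j} H_\ell}$ on the working qubits, each controlled on the $j$-th qubit of $R_\ell$; conditional on $R_\ell$ holding $\ket{\widetilde{x}_{i(\ell)}}$, this block implements $e^{-i \widetilde{x}_{i(\ell)} H_\ell}$. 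The variational observable $O'_{\bm{\theta}}$ then absorbs every $V_\ell(\bm{\theta})$, every $C_\ell$, and the original $O$ tensored with identity on the ancillas. Since each $C_\ell$ uses $R_\ell$ only as a control, the ancillas remain in $\ket{\widetilde{x}_{i(\ell)}}$ throughout the run, and the construction is a genuine explicit model of the form Eq.~(\ref{eq:explicit-model}).

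Third, I would carry out the error analysis. The per-gate discrepancy is $\|e^{-i x H_\ell} - e^{-i \widetilde{x} H_\ell}\|_\infty \leq |x - \widetilde{x}|\,\|H_\ell\|_\infty \leq 2^{-p}$, and telescoping across the $D$ substitutions yields a global unitary error bounded by $D\,2^{-p}$. The standard inequality $|\text{Tr}[(U\rho U^\dagger - \widetilde{U}\rho \widetilde{U}^\dagger)O]| \leq 2\|U-\widetilde{U}\|_\infty \|O\|_\infty$ then gives $|\widetilde{f}_{\bm{\theta}}(\bm{x}) - f_{\bm{\theta}}(\bm{x})| \leq 2\|O\|_\infty D\,2^{-p}$; demanding this be at most $\delta$ forces $p = \lceil \log_2(2D\|O\|_\infty /\delta)\rceil$, and the resulting ancilla- and extra-gate count is $Dp = \mathcal{O}(D\log(D\|O\|_\infty\delta^{-1}))$, matching the claim.

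The step I expect to be the main obstacle is the initial reduction to single-parameter exponentials: for fully general encoding unitaries this requires either invoking a Hamiltonian-simulation-style synthesis result or restricting attention to the (broad) class of bounded Hamiltonian-evolution encodings that are actually used in practice. Once that is settled, the construction and its error analysis amount to a controlled-rotation and truncation argument of the kind familiar from phase estimation.
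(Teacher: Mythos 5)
Your construction is essentially identical to the paper's own proof in Appendix~\ref{sec:dr-mappings}: bit-string ancilla registers prepared by data-dependent single-qubit rotations, data-independent controlled rotations replacing each encoding gate, and a telescoping operator-norm error bound combined with $|\tilde f_{\bm{\theta}}(\bm{x})-f_{\bm{\theta}}(\bm{x})|\le 2\norm{O}_\infty\norm{U-V}_\infty$, yielding $p=\mathcal{O}(\log(D\norm{O}_\infty\delta^{-1}))$ bits per gate and $\mathcal{O}(D\log(D\norm{O}_\infty\delta^{-1}))$ extra qubits and gates. The reduction you flag as the main obstacle is resolved in the paper exactly as you suggest, by restricting to encoding gates generated by Pauli strings, $e^{-\mathrm{i}h(\bm{x})H_n/2}$, which standard circuit identities convert into a single-qubit $R_z(h(\bm{x}))$ plus data-independent operations absorbed into the variational layers.
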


The general idea behind this construction is to encode the input data $\bm{x}$ in ancilla qubits, to \emph{finite precision}, which can then be used repeatedly to approximate data-encoding gates using data-independent unitaries. More precisely, all data components $x_i\in\mathbb{R}$ of an input vector $\bm{x} = (x_1, \ldots, x_d)$ are encoded as bit-strings $\ket{\widetilde{\bm{x}}_i} = \ket{b_0b_1\ldots b_{p-1}} \in \{0,1\}^{p}$, to some precision $\varepsilon = 2^{-p}$ (e.g., using $R_x(b_j)$ rotations on $\ket{0}$ states).\break Now, using $p$ \emph{fixed} rotations, e.g., of the form $R_z(2^{-j})$, controlled by the bits $\ket{b_j}$ and acting on $n$ ``working'' qubits, one can encode every $x_i$ in arbitrary (multi-qubit) rotations $e^{-\mathrm{i} x_i H}$, e.g., $R_z(x_i)$, arbitrarily many times. Given that all these fixed rotations are data-independent, the feature encoding of any such circuit hence reduces to the encoding of the classical bit-strings $\widetilde{\bm{x}}_i$, prior to all variational operations.\break By preserving the variational unitaries appearing in a data re-uploading circuit and replacing its encoding gates with such controlled rotations, we can then approximate any data re-uploading model of the form of Eq.~(\ref{eq:dr-model}). The approximation error $\delta$ of this mapping originates from the finite precision $\varepsilon$ of encoding $\bm{x}$, which results in an imperfect implementation of the encoding gates in the original circuit. But as $\varepsilon \rightarrow 0$, we also have $\delta \rightarrow 0$, and the scaling of $\varepsilon$ (or the number of ancillas $dp$) as a function of $\delta$ is detailed in Appendix \ref{sec:dr-mappings}.

\subsubsection{Exact mapping}

We now move to our main construction, resulting in \emph{exact} mappings between data re-uploading and explicit models, i.e., that achieve $\delta = 0$ with finite resources. We rely here on a similar idea to our previous construction, in which we encode the input data on ancilla qubits and later use data-independent operations to implement the encoding gates on the working qubits. The difference here is that we use gate teleportation techniques, a form of measurement-based quantum computation~\cite{briegel09}, to directly implement the encoding gates on ancillary qubits and teleport them back (via entangled measurements) onto the working qubits when needed (see Fig.~\ref{fig:example-exact-mapping}).

\begin{thm}\label{thm:exact-mapping}
Given an arbitrary data re-uploading model $f_{\bm{\theta}}(\bm{x}) = \textnormal{Tr}[\rho_{\bm{\theta}}(\bm{x})O_{\bm{\theta}}]$ as specified by Eq.~(\ref{eq:dr-model}), there exists a mapping that produces an equivalent explicit model $\tilde{f}_{\bm{\theta}}(\bm{x}) = \textnormal{Tr}[\rho'(\bm{x})O'_{\bm{\theta}}]$ as specified by Eq.~(\ref{eq:explicit-model}), such that:
\begin{equation}\label{eq:explicit-data-reuploading}
\textnormal{Tr}[\rho'(\bm{x})O'_{\bm{\theta}}] = \textnormal{Tr}[\rho_{\bm{\theta}}(\bm{x})O_{\bm{\theta}}],\ \forall \bm{x}, \bm{\theta}.
\end{equation}
and $\norm{O'_{\bm{\theta}}}_\infty^2 \leq (1-\delta')^{-1}\norm{O_{\bm{\theta}}}_\infty^2$, for an arbitrary re-normalization parameter $\delta'>0$. For $D$ the number of encoding gates used by the data re-uploading model, the equivalent explicit model uses $\mathcal{O}(D\log(D/\delta'))$ additional qubits and gates.
\end{thm}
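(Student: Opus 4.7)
The plan is to realize the data re-uploading circuit via gate teleportation, absorbing all $\bm{x}$-dependent operations into the initial feature state $\rho'(\bm{x})$ while keeping only $\bm{\theta}$-dependent variational unitaries and data-independent teleportation measurements inside the observable $O'_{\bm{\theta}}$. The key primitive I would rely on is: to apply an encoding gate $U_\ell(\bm{x})$ to the working register, prepare a resource pair in the state $(I\otimes U_\ell(\bm{x}))\ket{\Phi^+}$, then perform a Bell-basis measurement between the working register and one half of the resource. On the identity-Pauli outcome (probability $4^{-n_\ell}$ for an $n_\ell$-qubit gate), $U_\ell(\bm{x})$ has been applied exactly; the other outcomes produce a known Pauli byproduct that for a non-Clifford $U_\ell(\bm{x})$ cannot be commuted past the subsequent variational gates.

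Concretely, I would define $\rho'(\bm{x})$ by preparing, for each of the $D$ encoding gates, $T$ independent copies of the corresponding resource state, tensored with the $\ket{\bm{0}}$-initialized working register. The observable $O'_{\bm{\theta}}$ then encodes three ingredients: (i) the variational unitaries $V_\ell(\bm{\theta})$ acting on the working register in the prescribed order; (ii) at each encoding step, the fixed Bell measurements that sequentially attempt to teleport one of the $T$ available copies into the working register; and (iii) a projector $\Pi_{\mathrm{succ}}$ onto outcome patterns in which, for every encoding step, at least one of the $T$ attempts returned the identity byproduct. Once a given step has a successful attempt, the identity action of the observable on the remaining copies traces them out harmlessly.

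The heart of the proof is the analysis of $\Pi_{\mathrm{succ}}$ and the resulting norm bound. Each teleportation attempt succeeds with a constant probability $p>0$ depending only on the gate's locality, so choosing $T=\Theta(\log(D/\delta'))$ drives the per-gate failure probability below $\delta'/(2D)$; a union bound across all $D$ encoding gates yields a total success probability $p_{\mathrm{succ}}\geq 1-\delta'/2$ that is \emph{independent} of $\bm{x}$ and $\bm{\theta}$. Setting
\begin{equation}
O'_{\bm{\theta}} \;=\; p_{\mathrm{succ}}^{-1}\,\Pi_{\mathrm{succ}}\otimes \widetilde{O}_{\bm{\theta}},
\end{equation}
where $\widetilde{O}_{\bm{\theta}}$ is the effective observable induced on the working register on the successful branch, immediately gives $\mathrm{Tr}[\rho'(\bm{x})O'_{\bm{\theta}}]=\mathrm{Tr}[\rho_{\bm{\theta}}(\bm{x})O_{\bm{\theta}}]$ and $\|O'_{\bm{\theta}}\|_\infty^{2}\leq (1-\delta')^{-1}\|O_{\bm{\theta}}\|_\infty^{2}$, using $\|\Pi_{\mathrm{succ}}\|_\infty=1$ and the fact that the teleportation-measurement unitaries are norm-preserving. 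The resource count of $\mathcal{O}(D\log(D/\delta'))$ ancillas and gates then follows from the $T\cdot D$ resource states and associated measurement gadgets.

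The step I expect to be the main obstacle is ensuring that non-successful teleportation attempts on earlier copies do not corrupt the working register before a successful attempt occurs. One clean way to handle this is to exploit the block-diagonal structure of $\Pi_{\mathrm{succ}}$ in the joint Bell-outcome basis: only those branches in which every encoding step attains at least one identity byproduct contribute, and on those branches the working register sees precisely the sequence $V_L(\bm{\theta})U_L(\bm{x})\cdots V_1(\bm{\theta})U_1(\bm{x})$, regardless of which specific attempt succeeded. An alternative is to interleave each Bell measurement with a data-independent inverse-teleportation gadget on a fresh copy so that failed attempts compose to the identity on the working register; either route reduces the problem to the deterministic success case. The remaining bookkeeping—checking that $\rho'(\bm{x})$ is indeed $\bm{\theta}$-independent, that $O'_{\bm{\theta}}$ is $\bm{x}$-independent, and that all additions are polynomial in $D$ and $\log(1/\delta')$—is routine once the teleportation structure is in place.
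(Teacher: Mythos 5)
Your overall architecture matches the paper's: absorb the data-dependent gates into the feature state via gate teleportation, push the success projector and a known, data-independent renormalization into $O'_{\bm{\theta}}$, and conclude equality of expectation values with a controlled norm overhead. However, the central step of your argument --- boosting the per-gate success probability to $1-\mathcal{O}(\delta'/D)$ using $T=\Theta(\log(D/\delta'))$ \emph{identical} copies of $(I\otimes U_\ell(\bm{x}))\ket{\Phi^+}$ --- does not go through. A failed Bell-measurement attempt does not leave the working register untouched: it applies $P\,U_\ell(\bm{x})$ for a non-identity Pauli $P$, and since $U_\ell(\bm{x})$ is non-Clifford the required correction $U_\ell(\bm{x})P^\dagger U_\ell(\bm{x})^\dagger$ is data-dependent and not a Pauli, so a subsequent attempt with a fresh identical copy produces $P'U_\ell(\bm{x})PU_\ell(\bm{x})\ket{\psi}$, not $U_\ell(\bm{x})\ket{\psi}$. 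Consequently your claim that on every branch with ``at least one identity byproduct'' the working register sees the correct sequence is false; only the branches where the \emph{first} attempt succeeds are correct, which gives success probability $4^{-1}$ per (single-qubit) gate and hence a $4^{D}$ norm blowup --- exactly the exponential overhead the theorem is meant to avoid. Your fallback of interleaving ``inverse-teleportation'' gadgets fails for the same reason: teleporting $U_\ell(\bm{x})^\dagger$ also only succeeds up to a Pauli byproduct, and $P_2U_\ell^\dagger P_1 U_\ell\neq I$ for $P_1\neq I$, so failed attempts do not compose to the identity, and the union-bound/repeat-until-success analysis has no valid starting point.

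The missing idea in the paper's proof is a structural one: first compile every encoding gate down to a single-qubit $R_z(h_i(\bm{x}))$ (this is exact, using only data-independent conjugations), and then exploit the fact that the correction required after a failed $R_z(x)$ teleportation is again a $Z$-rotation with a \emph{known doubled angle}, $R_z(2x)$. This correction can itself be teleported using a fresh ancilla prepared in $R_z(2x)\ket{+}$, and so on recursively; the resource states at nesting level $j$ are $R_z(2^{j-1}h_i(\bm{x}))\ket{+}$, not identical copies. With $N$ nesting levels the per-gate probability of needing no leftover correction is $1-2^{-N}$, so $(1-2^{-N})^{D}\geq 1-\delta'$ for $N=\lceil\log_2(D/\delta')\rceil$, and projecting onto the correction-free outcomes and rescaling by the exactly known factor $p^{-1/2}$ yields $\norm{O'_{\bm{\theta}}}_\infty^2\leq(1-\delta')^{-1}\norm{O_{\bm{\theta}}}_\infty^2$ with $ND=\mathcal{O}(D\log(D/\delta'))$ extra qubits and gates. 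Without the reduction to $R_z$ and the doubled-angle nesting (or some equivalent cancellation mechanism), your scheme as stated only achieves the exponential-norm version of the mapping.
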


As we detail in Appendix \ref{sec:dr-mappings}, gate teleportation cannot succeed with unit probability without gate-dependent (and hence data-dependent) corrections conditioned on the measurement outcomes of the ancilla. But since we only care about equality in expectation values ($\textnormal{Tr}[\rho_{\bm{\theta}}(\bm{x})O_{\bm{\theta}}]$ and $\textnormal{Tr}[\rho'(\bm{x})O'_{\bm{\theta}}]$), we can simply discard these measurement outcomes in the observable $O'_{\bm{\theta}}$ (i.e., project on the correction-free measurement outcomes). In general, this leads to an observable with a spectral norm $\norm{O'_{\bm{\theta}}}_\infty^2 = 2^D \norm{O_{\bm{\theta}}}_\infty^2$ exponentially larger than originally, and hence a model that is exponentially harder to evaluate to the same precision. Using a \emph{nested} gate-teleportation scheme (see Appendix \ref{sec:dr-mappings}) with repeated applications of the encoding gates, we can however efficiently make this norm overhead arbitrarily small.

\begin{figure}[t]
\begin{center}
\includegraphics[width=\linewidth]{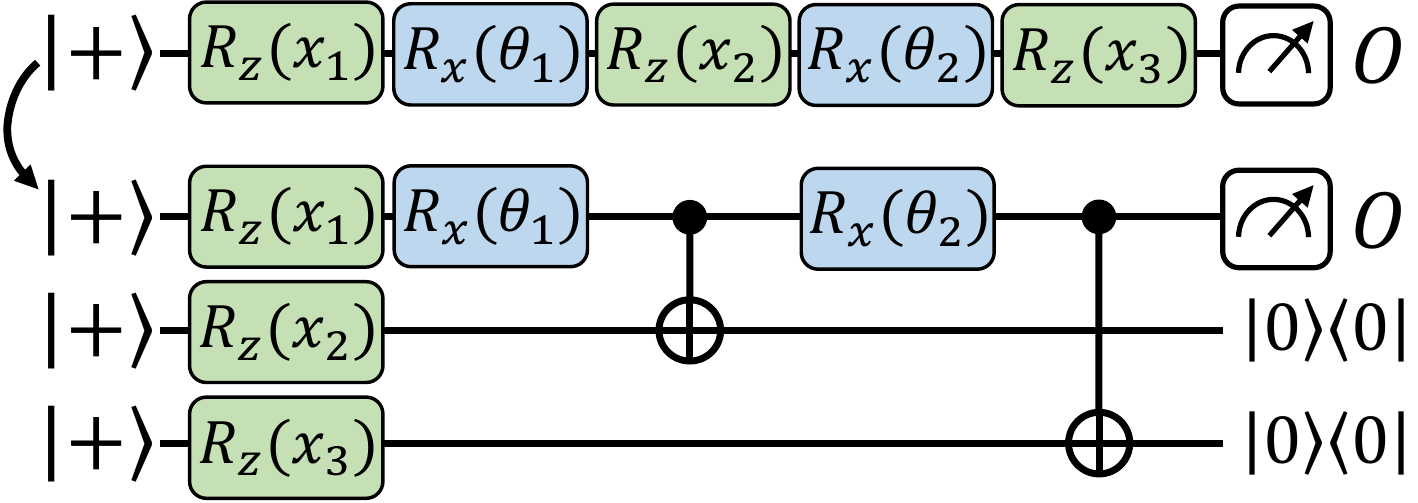}
\caption{An exact mapping from a data re-uploading model to an equivalent explicit model, using gate teleportation. The details of this mapping, as well as its more elaborate form (using nested gate teleportation) can found in Appendix \ref{sec:dr-mappings}.}
\label{fig:example-exact-mapping}
\end{center}
\vspace{-2em}
\end{figure}

As our findings indicate, mappings from data re-uploading to explicit models are not unique, and seem to always incur the use of additional qubits. In Sec.~\ref{sec:learning-sep} we prove that this is indeed the case, and that any mapping from an arbitrary data re-uploading model with $D$ encoding gates to an equivalent explicit model must use $\Omega(D)$ additional qubits in general. This makes our gate-teleportation mapping essentially optimal (i.e., up to logarithmic factors) in this extra cost.

To summarize, in this section, we demonstrated that linear quantum models can describe not only explicit and implicit models, but also data re-uploading circuits. More specifically, we showed that any hypothesis class of data re-uploading models can be mapped to an equivalent class of explicit models, that is, linear models with a restricted family of observables. In Appendix \ref{sec:explicit-universal}, we extend this result and show that explicit models can also approximate any \emph{computable} (classical or quantum) hypothesis class.

\section{Outperforming kernel methods with explicit and data re-uploading models}

From the standpoint of relating quantum models to each other, we have shown that the framework of linear quantum models allows to unify all standard models based on parametrized quantum circuits. While these findings are interesting from a theoretical perspective, they do not reveal how these models compare in practice. In particular, we would like to understand the advantages of using a certain model rather than the other in order to solve a given learning task. In this section, we address this question from several perspectives. First, we revisit the comparison between explicit and implicit models and clarify the implications of the representer theorem on the performance guarantees of these models. Then, we derive lower bounds for all three quantum models studied in this work in terms of their resource requirements, and show the existence of exponential separations between each pair of models. Finally, we discuss the implications of these results on the search for a quantum advantage in machine learning.

\subsection{Classical background and the representer theorem\label{sec:rkhs-representer}}

Interestingly, a piece of functional analysis from learning theory gives us a way of characterizing any family of linear quantum models \cite{schuld21}. Namely, the so-called \emph{reproducing kernel Hilbert space}, or RKHS \cite{scholkopf02}, is the Hilbert space $\mathcal{H}$ spanned by all functions of the form $f(\bm{x}) = \langle \phi(\bm{x}),w\rangle_{\mathcal{F}}$, for all $w \in \mathcal{F}$. It includes any explicit and implicit models defined by the quantum feature states $\phi(\bm{x}) = \rho(\bm{x})$. From this point of view, a relaxation of any learning task using implicit or explicit models as a hypothesis family consists in finding the function in the RKHS $\mathcal{H}$ that has optimal learning performance.
For the supervised learning task of modeling a target function $g(\bm{x})$ using a training set $\{(\bm{x}^{(1)}, g(\bm{x}^{(1)}), \ldots, (\bm{x}^{(M)}, g(\bm{x}^{(M)})\}$, this learning performance is usually measured in terms of a training loss of the form, e.g.,
\begin{equation}
	\widehat{\mathcal{L}}(f) = \frac{1}{M}\sum_{m=1}^{M} \left( f(\bm{x}^{(m)})-g(\bm{x}^{(m)}) \right)^2.
\end{equation}
The true figure of merit of this problem however, is in minimizing the \emph{expected} loss $\mathcal{L}(f)$, defined similarly as a probability-weighted average over the entire data space $\mathcal{X}$. For this reason, a so-called regularization term $\lambda \norm{f}^2_{\mathcal{H}} = \lambda \norm{O}^2_{\mathcal{F}}$ is often added to the training loss $\widehat{\mathcal{L}}_\lambda (f) = \widehat{\mathcal{L}}(f) + \lambda \norm{O}^2_{\mathcal{F}}$ to incentivize the model not to overfit on the training data. Here, $\lambda \geq 0$ is a hyperparameter that controls the strength of this regularization.

Learning theory also allows us to characterize the linear models in $\mathcal{H}$ that are optimal with respect to the regularized training loss $\widehat{\mathcal{L}}_\lambda(f)$, for any $\lambda \geq 0$. Specifically, the \emph{representer theorem} \cite{scholkopf02} states that the model $f_{\text{opt}}\in \mathcal{H}$ minimizing $\widehat{\mathcal{L}}_\lambda(f)$ is always a kernel model of the form of Eq.~(\ref{eq:kernel-model}) (see Appendix \ref{sec:representer-thm} for a formal statement).\break A direct corollary of this result is that implicit quantum models are guaranteed to achieve a lower (or equal) regularized training loss than any explicit quantum model using the same feature encoding \cite{schuld21}. Moreover, the optimal weights $\alpha_m$ of this model can be computed efficiently using $\mathcal{O}(M^2)$ evaluations of inner products on a quantum computer (that is, by estimating the expectation value in Fig.~\ref{fig:quantum-models}.b for all pairs of training points\footnote{For this work, we ignore the required precision of the estimations. We note however that these can require exponentially many measurements in the number of qubits, both for explicit \cite{mcclean18}\break and implicit \cite{kubler21} models.}) and with classical post-processing in time $\mathcal{O}(M^3)$ using, e.g., ridge regression or support vector machines \cite{scholkopf02}.

This result may be construed to suggest that, in our study of quantum machine learning models, we only need to worry about implicit models, where the only real question to ask is what feature encoding circuit we use to compute a kernel function, and all machine learning is otherwise classical.
In the next subsections, we show however the value of explicit and data re-uploading approaches in terms of generalization performance and resource requirements. 

\subsection{Explicit can outperform implicit models\label{sec:explicit-vs-implicit}}

We turn our attention back to the explicit models resulting from our approximate mappings (see Fig.~\ref{fig:construction}). Note that the kernel function associated to their bit-string encodings $\ket{\psi(\bm{x})} = \ket{0}^{\otimes n} \ket{\tilde{\bm{x}}}$, $\rho(\bm{x}) = \ket{\psi(\bm{x})}\!\!\bra{\psi(\bm{x})}$, is trivially
\begin{equation}\label{eq:res-kernel}
    k(\bm{x},\bm{x'}) = \prod_{i=1}^{d} \abs{\langle \widetilde{\bm{x}}_i | \widetilde{\bm{x}}'_i\rangle}^2 = \delta_{\widetilde{\bm{x}}, \widetilde{\bm{x}}'},
\end{equation}
that is, the Kronecker delta function of the bit-strings $\widetilde{\bm{x}}$ and $\widetilde{\bm{x}}'$. Let us emphasize that, for an appropriate precision $\varepsilon$ of encoding input vectors $\bm{x}$, the family of explicit models resulting from our construction includes good approximations of \emph{virtually any} parametrized quantum circuit model acting on $n$ qubits. Yet, all of these result in the same kernel function of Eq.~(\ref{eq:res-kernel}).
This is a rather surprising result, for two reasons. First, this kernel is classically computable, which, in light of the representer theorem, seems to suggest that a simple classical model of the form of Eq.~(\ref{eq:kernel-model}) can outperform \emph{any} explicit quantum model stemming from our construction, and hence any quantum model in the limit $\varepsilon \rightarrow 0$. Second, this implicit model always takes the form
\begin{equation}\label{eq:implicit-bitstring}
    f_{\bm{\alpha},\mathcal{D}}(\bm{x}) = \sum_{m=1}^{M} \alpha_m \delta_{\widetilde{\bm{x}}, \widetilde{\bm{x}}^{(m)}},
\end{equation}
which is a model that overfits the training data and fails to generalize to unseen data points, as, for $\varepsilon \rightarrow 0$ and any choice of $\bm{\alpha}$, $f_{\bm{\alpha},\mathcal{D}}(\bm{x}) = 0$ for any $\bm{x}$ outside the training set. As we detail in Appendix \ref{sec:dr-mappings}, similar observations can be made for the kernels resulting from our gate-teleportation construction.

These last remarks force us to rethink our interpretation of the representer theorem. When restricting our attention to the regularized training loss, implicit models do indeed lead to better training performance due to their increased expressivity.\footnote{On a classification task with labels $g(\bm{x}) = \pm 1$, the kernel model of Eq.~(\ref{eq:implicit-bitstring}) is optimal with respect to any regularized training loss for $\alpha_m=g(\bm{x}^{(m)})\ \forall m$ such that $\protect\widehat{\mathcal{L}}(f) = 0$ and  $\norm{f}_{\mathcal{H}}^2=M$.} But, as our construction shows, this expressivity can dramatically harm the generalization performance of the learning model, despite the use of regularization during training. Hence, restricting the set of observables accessible to a linear quantum model (or, equivalently, restricting the accessible manifold of the RKHS) can potentially provide a substantial learning advantage.

\subsection{Rigorous learning separations between all quantum models\label{sec:learning-sep}}

Motivated by the previous illustrative example, we analyze more rigorously the advantages of explicit and data re-uploading models over implicit models. For this, we take a similar approach to recent works in classical machine learning which showed that neural networks can efficiently solve some learning tasks that linear or kernel methods cannot \cite{daniely20,hsu21}. In our case, we quantify the efficiency of a quantum model in solving a learning task by the number of qubits and the size of the training set it requires to achieve a non-trivial expected loss. To obtain scaling separations, we consider a learning task specified by an arbitrary input dimension $d \in \mathbb{N}$ and express the resource requirements of the different quantum models as a function of $d$.

Similarly to Ref.~\cite{daniely20}, the learning task we focus on is that of learning parity functions (see Fig.~\ref{fig:summary-results_2}). These functions take as input a $d$-dimensional binary input $\bm{x}\in\{-1,1\}^d$ and return the parity (i.e., the product) of a certain subset $A \subset \{1,\ldots,d\}$ of the components of $\bm{x}$.\break The interesting property of these functions is that, for any two choices of $A$, the resulting parity functions are orthogonal in the Hilbert space $\mathcal{H}$ of functions from $\{-1,1\}^d$ to $\mathbb{R}$. Hence, since the number of possible choices for $A$ grow combinatorially with $d$, the subspace of $\mathcal{H}$ that these functions span also grows combinatorially with $d$ (can be made into a $2^d$ scaling by restricting the choices of $A$). On the other hand, a linear model (explicit or implicit) also covers a restricted subspace (or manifold) of $\mathcal{H}$. The dimension of this subspace is upper bounded by $2^{2n}$ for a quantum linear model acting on $n$ qubits, and by $M$ for an implicit model using $M$ training samples (see Appendix \ref{sec:learn-sep} for detailed explanations). Hence by essentially comparing these dimensions ($2^d$ versus $2^{2n}$ and $M$) \cite{hsu21}, we can derive our lower bounds for explicit and implicit models. As for data re-uploading models, they do not suffer from these dimensionality arguments. The different components of $\bm{x}$ can be processed sequentially by the model, such that a single-qubit data re-uploading quantum circuit can represent (and learn) any parity function.

We summarize our results in the following theorem, and refer to Appendix \ref{sec:learn-sep} for a more detailed exposition.

\begin{figure}[t]
\includegraphics[width=0.85\linewidth]{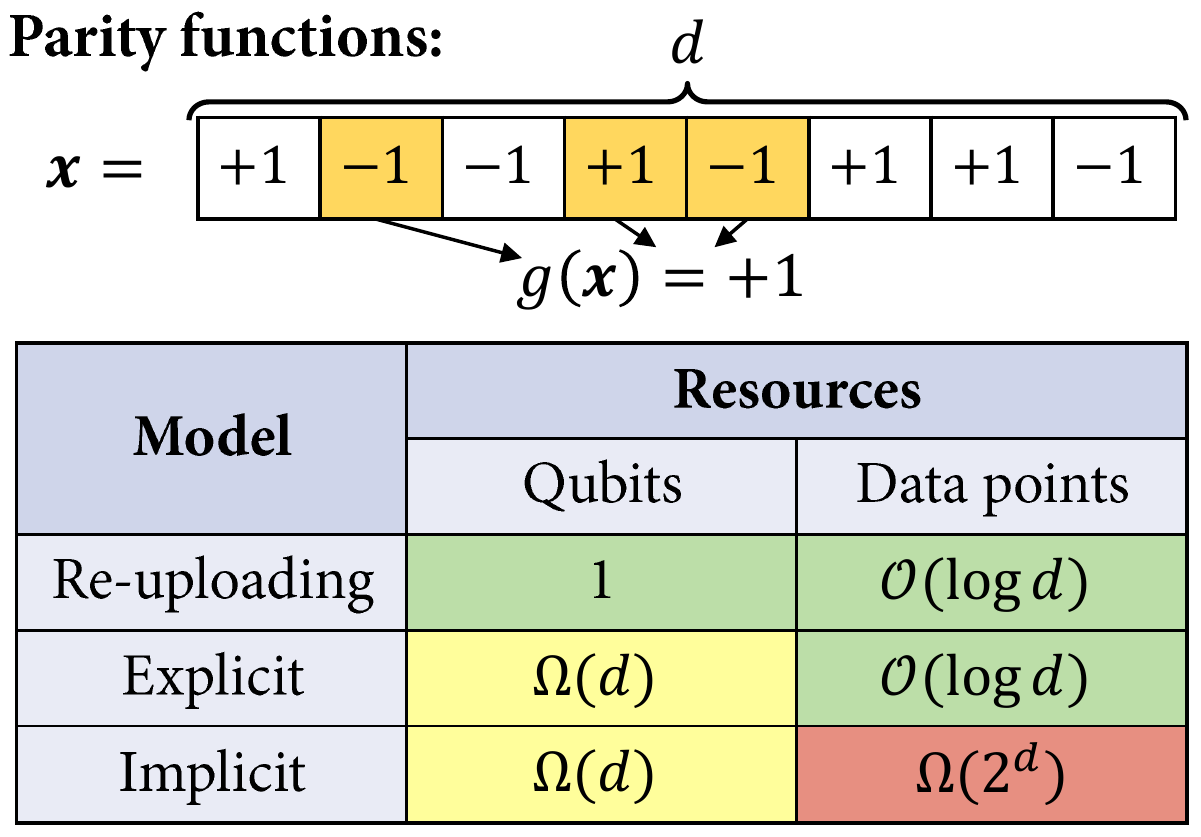}
\caption{Learning separations. We describe a learning task based on parity functions acting on $d$-bit input vectors $\bm{x} \in \{-1,1\}^d$, for $d\in\mathbb{N}$. This task allows us to separate all three quantum models studied in this work in terms of their resource requirements, as a function of $d$ (see Theorem \ref{thm:separation-main}).}
\label{fig:summary-results_2}
\vspace{-1em}
\end{figure}

\begin{thm}\label{thm:separation-main}
There exists a regression task specified by an input dimension $d \in \mathbb{N}$, a function family $\{g_A : \{-1,1\}^d \rightarrow \{-1,1\}\}_{A}$, and associated input distributions $\mathcal{D}_{A}$, such that, to achieve an average mean-squared error
$$\mathbb{E}_{A} \big[ \inf_{f} \norm{f-g_A}^2_{L^2(\mathcal{D}_{A})} \big] = \varepsilon < 1/2$$
(i) any linear quantum model needs to act on
$$n \geq \Omega(d + \log(1-2\varepsilon))$$
qubits,\\
(ii) any implicit quantum model additionally requires
$$M \geq \Omega(2^{d}(1-2\varepsilon))$$
data samples, while\\
(iii) a data re-uploading model acting on a single qubit and using $d$ encoding gates can be trained to achieve a perfect expected error with probability $1-\delta$, using $M = \mathcal{O}(\log(\frac{d}{\delta}))$ data samples.
\end{thm}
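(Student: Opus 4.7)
The plan is to pin the whole theorem on the family of parity functions $\chi_A(\bm{x}) = \prod_{i \in A} x_i$, indexed by subsets $A \subseteq \{1,\ldots,d\}$, together with the uniform input distribution $\mathcal{D}_A$ on $\{-1,1\}^d$. The crucial fact, used throughout the lower bounds, is that $\{\chi_A\}_A$ is an orthonormal family of size $N = 2^d$ in $L^2(\mathcal{D}_A)$ (indeed it is the Fourier basis of $\mathbb{Z}_2^d$), so any low-dimensional hypothesis class can only cover an equally low-dimensional slice of the span of the targets.

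For parts (i) and (ii), I would use the following dimension-counting lemma: if $\{g_A\}_A$ is orthonormal of size $N$ in a Hilbert space and $H$ is any linear subspace of dimension at most $K$, then
\begin{equation*}
\mathbb{E}_A \inf_{f \in H} \|f-g_A\|^2 \ =\ 1 - \frac{1}{N}\sum_A \|P_H g_A\|^2 \ \geq\ 1 - K/N,
\end{equation*}
using $\sum_A \|P_H g_A\|^2 = \textnormal{Tr}\bigl(P_H \sum_A g_A g_A^{\ast}\bigr) \leq \textnormal{Tr}(P_H) = K$. Any linear quantum model $f(\bm{x}) = \textnormal{Tr}[\rho(\bm{x})O]$ on $n$ qubits lies in a subspace of $L^2$ of dimension at most $\dim \mathcal{F} = 4^n$, so the lemma applied with $K = 4^n$, $N = 2^d$ gives $\varepsilon \geq 1 - 4^n/2^d$, hence part (i) after rearranging. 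For part (ii), for \emph{any fixed} training inputs $\{\bm{x}^{(m)}\}_{m=1}^M$ the implicit hypothesis class lies in $\textnormal{span}\{\rho(\bm{x}^{(m)})\}_{m=1}^M$, a subspace of dimension at most $M$; since the training inputs are drawn from a distribution independent of $A$, the lemma applies pointwise in the inputs and then in expectation, giving $\varepsilon \geq 1 - M/2^d$.

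For part (iii), I would exhibit an explicit single-qubit data re-uploading circuit with $d$ encoding gates that represents every parity. Writing $b_i = (1-x_i)/2 \in \{0,1\}$, consider interleaving data-encoding gates of the form $R_x(\pi b_i)$ with variational gates $V_i(\theta_i)$ that, at $\theta_i \in \{0,1\}$, either cancel or retain the preceding $X$-flip. Starting from $\ket{0}$ and measuring $Z$ at the end, direct calculation shows that the setting $\theta_i = \mathbf{1}[i \in A]$ outputs $\prod_{i \in A}(-1)^{b_i} = \chi_A(\bm{x})$, so the circuit realises every parity as $\bm{\theta}$ varies. Learning then reduces to recovering the characteristic vector of $A$ in $\{0,1\}^d$ from labelled examples, which is a linear inverse problem over $\mathbb{F}_2$ solved by Gaussian elimination; a standard rank argument on $M$ i.i.d.\ random $\{0,1\}^d$-vectors yields exact recovery with probability $1-\delta$ at the stated sample complexity.

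The main obstacle will be the implicit-model lower bound, because the training set formally depends on the target $g_A$ through its labels. The resolution is that only the training \emph{inputs} enter the span of the implicit hypothesis class, not the labels, so the $M$-dimensional bound holds conditionally on the inputs alone and the averaging over $A$ and over the input draw can be interchanged. A secondary technicality is extracting the $(1-2\varepsilon)$ factor in the theorem (rather than the raw $(1-\varepsilon)$ from the lemma), which I would handle by tracking the normalization of $\pm 1$-valued targets in $L^2$ together with the translation from ``best-in-class squared error'' to the stated error form; I would similarly verify that the gate $V_i(\theta_i)$ in part (iii) can be realized within the strict data re-uploading template $V_\ell(\bm{\theta}) U_\ell(\bm{x})$ of Eq.~(\ref{eq:dr-model}) with only $\mathcal{O}(1)$ single-qubit unitaries per encoding step.
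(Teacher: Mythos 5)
Your dimension-counting argument for parts (i) and (ii) is sound and is essentially the paper's route (it is the content of Hsu's theorem, which the paper invokes; your trace-of-projector proof is a fine self-contained substitute), and your single-qubit circuit for representing all parities mirrors the paper's construction. The genuine gap is in part (iii): with the \emph{uniform} input distribution, on which your whole proposal rests, the claimed sample complexity $M=\mathcal{O}(\log(d/\delta))$ is unachievable. Identifying $A$ among your $2^d$ candidates from uniformly random labelled examples requires $\Omega(d)$ samples information-theoretically (each label carries at most one bit about $A$), and your own algorithmic suggestion confirms this: Gaussian elimination over $\mathbb{F}_2$ needs the $M$ sample vectors to have rank $d$, which is impossible for $M<d$, and the ``standard rank argument'' yields success probability $1-\delta$ only around $M = d + \mathcal{O}(\log(1/\delta))$. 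Moreover, since distinct parities are orthonormal in $L^2$ of the uniform distribution, even achieving small expected squared error (rather than exact recovery) is impossible while exponentially many parities remain consistent with the data, so no learner of any kind can meet the stated bound under this distribution.

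This is precisely why the theorem speaks of input distributions $\mathcal{D}_A$ indexed by $A$, and why the paper follows Daniely and Malach in taking, for each $A$ with $|A|$ odd, a half-and-half mixture of the uniform distribution with a second distribution $\mathcal{D}^{(2)}_A$ in which the coordinates in $A$ are perfectly correlated (all $+1$ or all $-1$), so that $x_i = g_A(\bm{x})$ for $i \in A$ on that branch. This component statistically reveals $A$ coordinate by coordinate: estimating $\mathbb{E}[(x_i-g_A(\bm{x}))^2]$, which equals $1$ for $i\in A$ and $2$ otherwise, via Hoeffding plus a union bound gives $M = \mathcal{O}(\log(d/\delta))$. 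Meanwhile the uniform half of the mixture preserves your lower bounds up to a factor $1/2$, which is the true origin of the $(1-2\varepsilon)$ in the statement --- it is not a normalization bookkeeping issue as you suggest, but the price of passing from the uniform distribution to the mixture. Without introducing such $A$-dependent distributions (or another mechanism that leaks $A$ to the learner cheaply), parts (i)/(ii) and part (iii) cannot hold simultaneously for the same task, so your proof of (iii) fails as written.
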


A direct corollary of this result is a lower bound on the number of additional qubits that a universal mapping from any data re-uploading model to  equivalent explicit models must use:

\begin{corollary}\label{cor:lower-bound}
Any universal mapping that takes as input an arbitrary data re-uploading model $f_{\bm{\theta}}$ with $D$ encoding gates and maps it to an equivalent explicit model $\widetilde{f}_{\bm{\theta}}$ must produce models acting on $\Omega(D)$ additional qubits for worst-case inputs.
\end{corollary}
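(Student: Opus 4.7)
The plan is to derive the corollary directly from Theorem~\ref{thm:separation-main} by a simple contradiction / counting argument, using the parity regression task as the worst-case input to the purported mapping. Since the corollary only asserts an $\Omega(D)$ lower bound on the qubit overhead for worst-case inputs, I do not need to analyze the mapping itself; I only need a family of data re-uploading models for which every equivalent explicit model is forced to be wide.

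First I would invoke part (iii) of Theorem~\ref{thm:separation-main}: for each input dimension $d \in \mathbb{N}$, there is a single-qubit data re-uploading model $f_{\bm{\theta}_d}$ with exactly $D = d$ encoding gates that, after training, achieves arbitrarily small expected mean-squared error on the parity task (with high probability over the training sample). Fix such a trained model as the input to the alleged universal mapping. Because the mapping is \emph{universal}, it must produce an explicit model $\widetilde{f}_{\bm{\theta}_d}$ satisfying $\widetilde{f}_{\bm{\theta}_d}(\bm{x}) = f_{\bm{\theta}_d}(\bm{x})$ for all $\bm{x}$, and hence with the same expected loss on the parity task.

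Next I would apply part (i) of Theorem~\ref{thm:separation-main} to $\widetilde{f}_{\bm{\theta}_d}$: any linear quantum model achieving expected error $\varepsilon < 1/2$ on this task must act on at least $n \geq \Omega(d + \log(1 - 2\varepsilon))$ qubits. Choosing $\varepsilon$ bounded away from $1/2$ (which part (iii) allows, since the data re-uploading model achieves perfect expected error in the limit), this gives $n \geq \Omega(d)$. Since the original data re-uploading model acts on a single qubit while $D = d$, the mapping must have introduced $n - 1 \geq \Omega(d) = \Omega(D)$ additional qubits, which is precisely the claim of Corollary~\ref{cor:lower-bound}.

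There is no real obstacle here — the corollary is essentially a one-line consequence of combining (i) and (iii) of Theorem~\ref{thm:separation-main} — but the one point I would be careful about is the quantifier structure: the corollary only demands a worst-case lower bound, so it suffices to exhibit \emph{one} family of data re-uploading models (parameterized by $D$) that witnesses the $\Omega(D)$ overhead, rather than arguing about all possible inputs. I would also briefly note that the equivalence of $\widetilde{f}_{\bm{\theta}_d}$ and $f_{\bm{\theta}_d}$ in expectation value transfers the learning performance verbatim, so the lower bound of part (i) indeed applies to $\widetilde{f}_{\bm{\theta}_d}$ without any loss.
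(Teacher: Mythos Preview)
Your proposal is correct and follows essentially the same route as the paper: exhibit the single-qubit parity data re-uploading circuit as the worst-case input, note that any equivalent explicit model would then represent all parities, and invoke the $\Omega(d)$-qubit lower bound for linear quantum models on the parity task (the paper cites Lemmas~\ref{lem:sep-linear} and~\ref{lem:sep-reuploading} directly rather than Theorem~\ref{thm:separation-main}, but these are the ingredients of parts (i) and (iii)). One small simplification: you do not need the \emph{training} guarantee of part (iii) at all---pure representability (the circuit of Fig.~\ref{fig:parity-circuit} exactly computes every $g_A$ for an appropriate $\bm{\theta}$) already forces the equivalent explicit family to achieve $\varepsilon = 0$ on average, so you can drop the ``after training, with high probability'' detour and apply (i) directly.
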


Comparing this lower bound to the scaling of our gate-teleportation mapping (Theorem \ref{thm:exact-mapping}), we find that it is optimal up to logarithmic factors.

\subsection{Quantum advantage beyond kernel methods\label{sec:num-comp}}

A major challenge in quantum machine learning is showing that the quantum methods discussed in this work can achieve a learning advantage over (standard) classical methods. While some approaches to this problem focus on constructing learning tasks with separations based on complexity-theoretic assumptions \cite{liu20,sweke20}, other works try to assess empirically the type of learning problems where quantum models show an advantage over standard classical models \cite{schuld20,huang20}. In this line of research, Huang \emph{et al.}~\cite{huang20} propose looking into learning tasks where the target functions are themselves generated by (explicit) quantum models. Following similar observations to those made in Sec.~\ref{sec:rkhs-representer} about the learning guarantees of kernel methods, the authors also choose to assess the presence of quantum advantages by comparing the learning performance of standard classical models only to that of implicit quantum models (from the same family as the target explicit models).\break This restricted comparison led to the conclusion that, with the help of training data, classical machine learning models could be as powerful as quantum machine learning models, even in these tailored learning tasks.

Having discussed the limitations of kernel methods in the previous subsections, we revisit this type of numerical experiments, where we additionally evaluate the performance of explicit models on these type of tasks.

\begin{figure}[t]
\includegraphics[width=\linewidth]{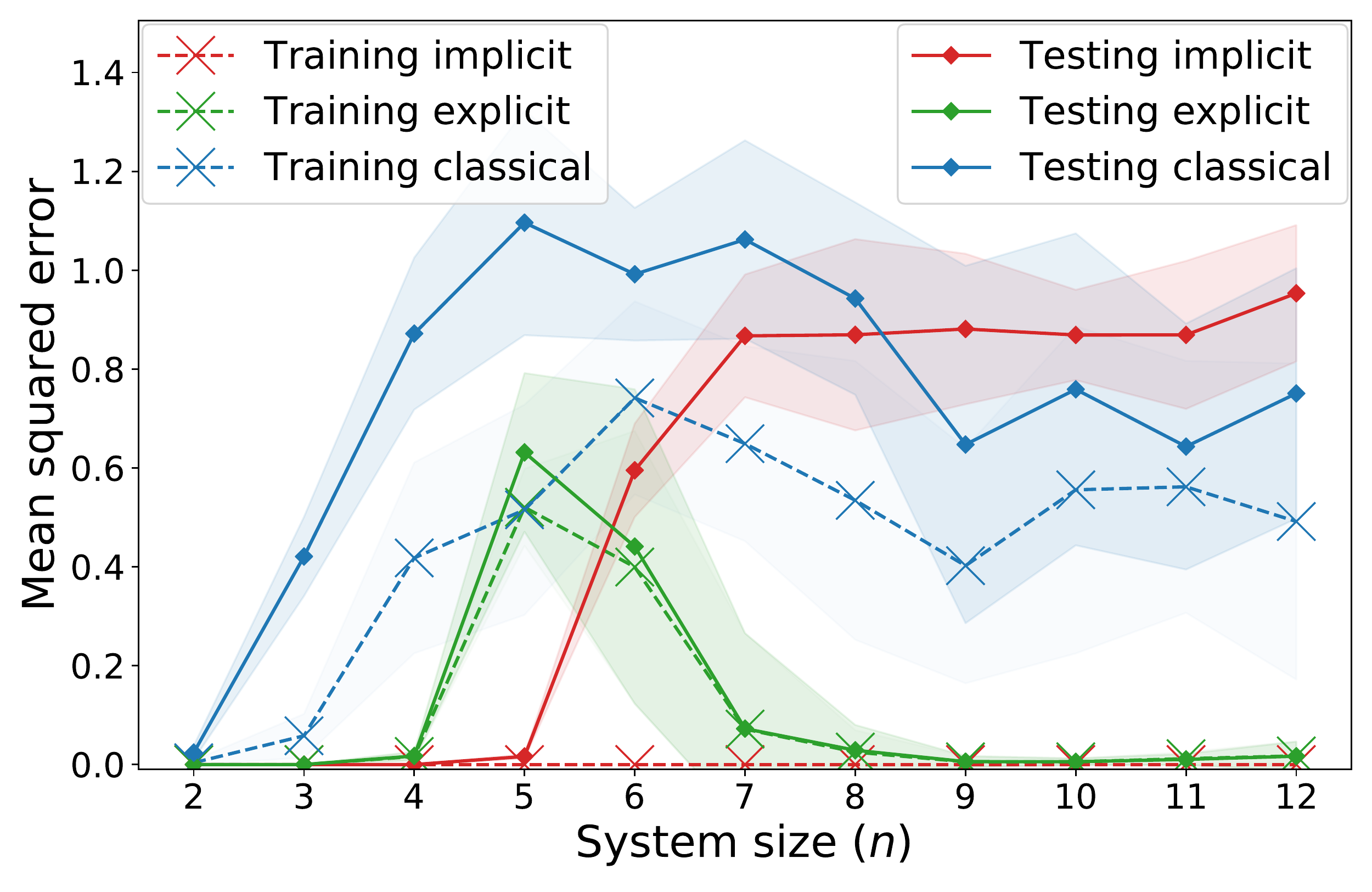}
\vspace{-2em}
\caption{Regression performance of explicit, implicit and classical models on a ``quantum-tailored'' learning task. For all system sizes, each model has access to a training set of $M=1000$ pre-processed and re-labeled fashion-MNIST images.\break Testing loss is computed on a test set of size $100$. Shaded regions indicate the standard deviation over $10$ labeling functions. The training errors of implicit models are close to $0$ for all system sizes.}
\vspace{-1em}
\label{fig:fashion_regression}
\end{figure}

Similarly to Huang \emph{et al.}~\cite{huang20}, we consider a regression task with input data from the fashion-MNIST dataset \cite{xiao17}, composed of 28x28-pixel images of clothing items. Using principal component analysis, we first reduce the dimension of these images to obtain $n$-dimensional vectors, for $2 \leq n \leq 12$. We then label the images using an explicit model acting on $n$ qubits. For this, we use the feature encoding proposed by Havl\'{i}\v{c}ek \emph{et al.} \cite{havlivcek19}, which is conjectured to lead to classically intractable kernels, followed by a hardware-efficient variational unitary \cite{peruzzo14}. The expectation value of a Pauli $Z$ observable on the first qubit then produces the data labels.\footnote{Note that we additionally normalize the labels as to obtain a standard deviation of $1$ for all system sizes.} On this newly defined learning task, we test the performance of explicit models from the same function family as the explicit models generating the (training and test) data, and compare it to that of implicit models using the same feature encoding (hence from the same extended family of linear models), as well as a list of standard classical machine learning algorithms that are hyperparametrized for the task (see Appendix \ref{sec:num-sim}). The results of this experiment are presented in Fig.\ \ref{fig:fashion_regression}.

The training losses we observe are consistent with our previous findings: the implicit models systematically achieve a lower training loss than their explicit counterparts (for an unregularized loss\footnote{The addition of regularization to the training loss of the implicit model does not impact the separation we observe here (see Appendix \ref{sec:add-num}).} notably, the implicit models achieve a training loss of $0$). With respect to the testing loss on the other hand, which is representative of the expected loss, we see a clear separation starting from $n=7$ qubits, where the classical models start having a competitive performance with the implicit models, while the explicit models clearly outperform them both. This goes to show that the existence of a quantum advantage should not be assessed only by comparing classical models to quantum kernel methods, as explicit (or data re-uploading) models can also conceal a substantially better learning performance.

\vspace{-2.5em}

\section{Conclusion}

In this work, we present a unifying framework for quantum machine learning models by expressing them as linear models in quantum feature spaces. In particular, we show how data re-uploading circuits can be represented exactly by explicit linear models in larger feature spaces. While this unifying formulation as linear models may suggest that all quantum machine learning models should be treated as kernel methods, we illustrate the advantages of variational quantum methods for machine learning. Going beyond the advantages in training performance guaranteed by the representer theorem, we first show how a systematic ``kernelization" of linear quantum models can be harmful in terms of their generalization performance. Further, we analyze the resource requirements (number of qubits and data samples used by) these models, and show the existence of exponential separations between data re-uploading, linear, and kernel quantum models to solve certain learning tasks.

One take-away message from our results is that training loss, even when regularized, is a misleading figure of merit. Generalization performance, which is measured on seen as well as unseen data, is in fact the important quantity to care about in (quantum) machine learning. These two sentences written outside of context will seem obvious to individuals well-versed in learning theory. However, it is crucial to recall this fact when evaluating the consequences of the representer theorem. This theorem only discusses regularized training loss, and thus despite its guarantees on the training loss of quantum kernel methods, it allows explicit models to have an exponential learning advantage in the number of data samples they use to achieve a good generalization performance.

From the limitations of quantum kernel methods highlighted by these results, we revisit a discussion on the power of quantum learning models relative to classical models in machine learning tasks with quantum-generated data. In a similar learning task to that of Huang \emph{et al.}~\cite{huang20}, we show that, while standard classical models can be competitive with quantum kernel methods even in these ``quantum-tailored'' problems, variational quantum models can exhibit a significant learning advantage. These results give us a more comprehensive view of the quantum machine learning landscape and broaden our perspective on the type of models to use in order to achieve a practical learning advantage in the NISQ regime.

\section{Discussion}

In this paper, we focus on the theoretical foundations of quantum machine learning models and how expressivity impacts generalization performance. But a major practical consideration is also that of \emph{trainability} of these models. In fact, we know of obstacles in trainability for both explicit and implicit models. Explicit models can suffer from barren plateaus in their loss landscapes \cite{mcclean18,cerezo21}, which manifest in exponentially vanishing gradients in the number of qubits used, while implicit models can suffer from exponentially vanishing kernel values \cite{kubler21,thanasilp22}. While these phenomena can happen under different conditions, they both mean that an exponential number of circuit evaluations can be needed to train and make use of these models. Therefore, aside from the considerations made in this work, emphasis should also be placed on avoiding these obstacles to make good use of quantum machine learning models in practice.

The learning task we consider to show the existence of exponential learning separations between the different quantum models is based on parity functions, which is not a concept class of practical interest in machine learning. We note however that our lower bound results can also be extended to other learning tasks with concept classes of large dimension (i.e., composed of many orthogonal functions). Quantum kernel methods will necessarily need a number of data points that scales linearly with this dimension, while, as we showcased in our results, the flexibility of data re-uploading circuits, as well as the restricted expressivity of explicit models can lead to substantial savings in resources. It remains an interesting research direction to explore how and when can these models be tailored to a machine learning task at hand, e.g., through the form of useful \emph{inductive biases} (i.e., assumptions on the nature of the target functions) in their design.

\section*{Code availability}
The code used to run the numerical simulations, implemented using TensorFlow Quantum \cite{broughton20}, is available at \href{https://github.com/sjerbi/QML-beyond-kernel}{https://github.com/sjerbi/QML-beyond-kernel} \cite{jerbi23}.

\section*{Acknowledgments}
The authors would like to thank Isaac D.~Smith, Casper Gyurik, Matthias C.~Caro, Elies Gil-Fuster, Ryan Sweke, and Maria Schuld for helpful discussions and comments, as well as Hsin-Yuan Huang for clarifications on their numerical simulations \cite{huang20}. SJ, LJF, HPN and HJB acknowledge support from the Austrian Science Fund (FWF) through the projects DK-ALM:W1259-N27 and SFB BeyondC F7102. SJ also acknowledges the Austrian Academy of Sciences as a recipient of the DOC Fellowship. HJB also acknowledges support by the European Research Council (ERC) under Project No. 101055129. HJB was also supported by the Volkswagen Foundation (Az:97721). This work was in part supported by the Dutch Research Council (NWO/OCW), as part of the Quantum Software Consortium program (project number 024.003.037). VD acknowledges the support by the project NEASQC funded from the European Union’s Horizon 2020 research and innovation programme (grant agreement No 951821). VD also acknowledges support through an unrestricted gift from Google Quantum AI.

\bibliography{references}

\clearpage

\appendix

\section{Representer theorem\label{sec:representer-thm}}

In this appendix, we give a formal statement of the representer theorem from learning theory.

\begin{theorem}[Representer theorem \cite{scholkopf02}]
Let $g: \mathcal{X} \rightarrow \mathcal{Y}$ be a target function with input and output domains $\mathcal{X}$ and $\mathcal{Y}$, $\mathcal{D} = \{(\bm{x}^{(1)}, g(\bm{x}^{(1)}), \ldots, (\bm{x}^{(M)}, g(\bm{x}^{(M)})\}$  a training set of size $M$, and $k: \mathcal{X}\times\mathcal{X} \rightarrow \mathbb{R}$ a kernel function with a corresponding reproducing kernel Hilbert space (RKHS) $\mathcal{H}$. For any strictly monotonic increasing regularization function $h:[0,\infty) \rightarrow \mathbb{R}$ and any training loss $\widehat{\mathcal{L}}: (\mathcal{X}\times\mathcal{Y})^M\times\mathcal{Y}^M \rightarrow \mathbb{R}\cup\infty$, we have that any minimizer of the regularized training loss from the RKHS $\mathcal{H}$, 
\begin{equation*}
	f_{\textnormal{opt}} = \underset{f\in\mathcal{H}}{\textnormal{argmin}} \left\{ \widehat{\mathcal{L}}(\mathcal{D},f(\mathcal{D}_{\bm{x}})) + h(\norm{f}^2_{\mathcal{H}}) \right\} 
\end{equation*}
admits a representation of the form
\begin{equation*}
	f_{\textnormal{opt}}(\bm{x}) = \sum_{m=1}^{M} \alpha_m k(\bm{x},\bm{x}^{(m)})
\end{equation*}
where $\alpha_m \in \mathbb{R}$ for all $1\leq m \leq M$.
\end{theorem}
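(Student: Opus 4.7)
The plan is the standard orthogonal decomposition argument. I would first use the Hilbert space structure of $\mathcal{H}$ to split any candidate $f \in \mathcal{H}$ into two pieces along the subspace $\mathcal{H}_{\mathcal{D}} = \mathrm{span}\{k(\cdot, \bm{x}^{(m)})\}_{m=1}^M$. Since $\mathcal{H}_{\mathcal{D}}$ is finite-dimensional and hence closed, orthogonal projection in $\mathcal{H}$ gives a unique decomposition $f = f_{\parallel} + f_{\perp}$ with $f_{\parallel} \in \mathcal{H}_{\mathcal{D}}$ and $\langle f_{\perp}, k(\cdot, \bm{x}^{(m)})\rangle_{\mathcal{H}} = 0$ for every training input.

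Next, I would invoke the reproducing property $f(\bm{x}^{(m)}) = \langle f, k(\cdot, \bm{x}^{(m)})\rangle_{\mathcal{H}}$, which immediately gives $f(\bm{x}^{(m)}) = f_{\parallel}(\bm{x}^{(m)}) + 0 = f_{\parallel}(\bm{x}^{(m)})$ for each $m$. Therefore the vector of training predictions $f(\mathcal{D}_{\bm{x}})$ is unchanged by discarding $f_{\perp}$, so the data-fitting term $\widehat{\mathcal{L}}(\mathcal{D}, f(\mathcal{D}_{\bm{x}}))$ depends on $f$ only through $f_{\parallel}$. At the same time, the Pythagorean identity yields $\|f\|_{\mathcal{H}}^2 = \|f_{\parallel}\|_{\mathcal{H}}^2 + \|f_{\perp}\|_{\mathcal{H}}^2 \geq \|f_{\parallel}\|_{\mathcal{H}}^2$, with equality if and only if $f_{\perp} = 0$.

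Combining these observations with the strict monotonicity of $h$ gives
\begin{equation*}
\widehat{\mathcal{L}}(\mathcal{D}, f(\mathcal{D}_{\bm{x}})) + h(\|f\|_{\mathcal{H}}^2) \;\geq\; \widehat{\mathcal{L}}(\mathcal{D}, f_{\parallel}(\mathcal{D}_{\bm{x}})) + h(\|f_{\parallel}\|_{\mathcal{H}}^2),
\end{equation*}
with strict inequality whenever $f_{\perp} \neq 0$. Hence any minimizer $f_{\mathrm{opt}}$ must satisfy $f_{\perp} = 0$, i.e.\ $f_{\mathrm{opt}} \in \mathcal{H}_{\mathcal{D}}$, which is exactly the claimed representation $f_{\mathrm{opt}}(\bm{x}) = \sum_{m=1}^{M} \alpha_m k(\bm{x}, \bm{x}^{(m)})$ for some coefficients $\alpha_m \in \mathbb{R}$.

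The argument is short and essentially routine once the RKHS machinery is in place; there is no real obstacle. The only point requiring minor care is that \emph{strict} monotonicity (not just monotonicity) of $h$ is needed to upgrade the inequality to a strict one when $f_{\perp} \neq 0$, which is what forces every minimizer, and not merely some minimizer, to lie in $\mathcal{H}_{\mathcal{D}}$. One should also briefly note that the loss $\widehat{\mathcal{L}}$ is assumed to take the training predictions through the coordinates $f(\bm{x}^{(m)})$ only, which is precisely the hypothesis encoded in its signature $(\mathcal{X}\times\mathcal{Y})^M \times \mathcal{Y}^M \rightarrow \mathbb{R}\cup\{\infty\}$ in the statement.
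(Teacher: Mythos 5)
Your proof is correct and is the standard orthogonal-decomposition argument; the paper does not prove this theorem itself but simply cites Schölkopf and Smola, whose proof is exactly the projection-onto-$\mathrm{span}\{k(\cdot,\bm{x}^{(m)})\}$ argument you give, including the use of strict monotonicity of $h$ to force every minimizer (not just some minimizer) into that span. Nothing further is needed.
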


A common choice for the regularization function is simply $h(\norm{f}^2_{\mathcal{H}}) = \lambda \norm{f}^2_{\mathcal{H}}$, where $\lambda \geq 0$ is a hyperparameter adjusting the strength of the regularization.

\section{Mappings from data re-uploading to explicit models\label{sec:dr-mappings}}

In this section, we detail possible mappings from data re-uploading models to explicit models, and prove Proposition \ref{prop:approx-mapping} and Theorem \ref{thm:exact-mapping} from Sec.~\ref{sec:linear-reup}.

In our analysis, we restrict our attention to encoding gates of the form $e^{-\mathrm{i} h(\bm{x}) H_n / 2}$, for $H_n$ an arbitrary Pauli string acting on $n$ qubits, e.g., $H_3 = X \otimes Z \otimes I$, and $h: \mathbb{R}^d \rightarrow \mathbb{R}$ an arbitrary function mapping real-valued input vectors $\bm{x}$ to rotation angles $h(\bm{x})$. Using known techniques (see Sec.\ 4.7.3 in \cite{nielsen02}), we can show that in order to implement any such gate $e^{-\mathrm{i} h(\bm{x}) H_n/2}$ \emph{exactly}, one only needs to perform a Pauli-Z rotation $e^{-\mathrm{i} h(\bm{x}) Z/2}$ on a single of these $n$ qubits, along with $\mathcal{O}(n)$ operations that are independent of $\bm{x}$ (and can therefore be absorbed by the surrounding variational unitaries). Therefore, in our mappings, we only need to focus on encoding gates of the form $R_z(h(\bm{x})) = e^{-\mathrm{i} h(\bm{x}) Z/2}$.

\subsection{Approximate bit-string mapping}

We start by analyzing the resource requirements (in terms of number of additional qubits and gates) of our approximate bit-string mapping (Proposition \ref{prop:approx-mapping}).

Note that, in our construction (see Fig.~\ref{fig:construction}), the ancilla qubits are always prepared in computational basis states and only act as classical controls throughout the circuit. Hence, the operation $\textnormal{Tr}_{\widetilde{\bm{x}}}[V_1(\bm{\theta})C\text{-}\widetilde{U}_1\!\ldots\! V_D(\bm{\theta})C\text{-}\widetilde{U}_D]$ obtained by tracing out these ancillas is equivalent to the \emph{unitary} $V_1(\bm{\theta})U_1(\widetilde{x}_1)\!\ldots\!V_D(\bm{\theta})U_D(\widetilde{x}_D)$, where data-dependent rotations are only implemented to angle-precision $\varepsilon=2^{-p}$. In the following, we relate this precision $\varepsilon$ (or equivalently, the number of ancilla qubits $dp$)\footnote{We assume the worst-case scenario where each component $x_i$ of $\bm{x}$ is assigned to a unique encoding gate, such that $D=d$.} to the approximation error $\delta\geq|\tilde{f}_{\bm{\theta}}(\bm{x}) - f_{\bm{\theta}}(\bm{x})|$ of our mapping.

Call $U = V_1(\bm{\theta})U_1(x_1)\!\ldots\! V_D(\bm{\theta})U_D(x_D)$ the data re-uploading unitary and $V= V_1(\bm{\theta})U_1(\widetilde{x}_1) \!\ldots\! V_D(\bm{\theta})U_D(\widetilde{x}_D)$ its approximation. We first relate the error $\delta$ to the distance measure $\norm{U-V}_\infty=\max_{\ket{\psi}}\!\norm{(U\!-\!V)\ket{\psi}}$:
\begin{align}
	\abs{\tilde{f}_{\bm{\theta}}(\bm{x}) - f_{\bm{\theta}}(\bm{x})} &= \abs{\bra{\psi}U^{\dagger}OU\ket{\psi} - \bra{\psi}V^{\dagger}OV\ket{\psi}} \nonumber\\
	&= \abs{\bra{\psi}U^{\dagger}O\ket{\Delta} - \bra{\Delta}OV\ket{\psi}} \nonumber\\
	&\leq \abs{\bra{\psi}U^{\dagger}O\ket{\Delta}} + \abs{\bra{\Delta}OV\ket{\psi}} \nonumber\\
	&\leq \begin{multlined}[t][5cm]\sqrt{\abs{\bra{\psi}U^{\dagger}O^2U\ket{\psi}}}\norm{\ket{\Delta}} \\+ \sqrt{\abs{\bra{\psi}V^{\dagger}O^2V\ket{\psi}}}\norm{\ket{\Delta}}\end{multlined} \nonumber\\
	&\leq 2 \norm{O}_\infty \norm{U-V}_\infty
\end{align}
for $\ket{\Delta} = (U-V)\ket{\psi}$, by using the triangular and Cauchy-Schwarz inequalities to derive the first two inequalities, and the definition of the spectral norm $\norm{O}_{\infty}$ and the distance $\norm{U-V}_\infty$ to derive the last inequality.

Note that, for $U$ and $V$ obtained by sequences of unitary gates, the distance $\norm{U-V}_\infty$ is linear in the pairwise distances between the gates in these sequences (see Sec.~4.5.3 of \cite{nielsen02}):
\begin{equation}
\norm{U-V}_\infty \leq \sum_{j=1}^{D} \norm{U_j-V_j}_\infty
\end{equation}
therefore, to obtain $|\tilde{f}_{\bm{\theta}}(\bm{x}) - f_{\bm{\theta}}(\bm{x})| \leq \delta$, it is sufficient to enforce:
\begin{equation}\label{eq:error-delta}
\norm{U_j-V_j}_\infty \leq \frac{\delta}{2D\norm{O}_\infty}
\end{equation}
for each and single encoding gate in the circuit.

Since we assumed that the encoding gates of the circuit take the form $U_j = R_z(x_j)$, we can bound $\norm{U_j-V_j}_\infty$ as a function of the precision $\varepsilon$ of encoding $x_j$ as:
\begin{align}
	\norm{U_j-V_j}_\infty &= \max_{\ket{\psi}}\!\norm{(R_z(x_j+\varepsilon)-R_z(x_j))\ket{\psi}} \nonumber\\
	&= \norm{e^{ix_j}(1-e^{i\varepsilon})\ket{1}} = \abs{1-e^{i\varepsilon}} \nonumber\\
	&\leq \sqrt{2}\varepsilon \label{eq:error-rot}
\end{align}

From Eqs.~(\ref{eq:error-delta}) and (\ref{eq:error-rot}), we then get $\varepsilon \leq \frac{\delta}{2\sqrt{2}D\norm{O}_\infty}$, or equivalently, $p \geq \lceil\log_2(\frac{2\sqrt{2}D\norm{O}_\infty}{\delta})\rceil$. The number of additional qubits in the circuit is then $Dp$, which is also the number of additional gates ($R_x$ data-encoding rotations and controlled data-independent rotations).

\subsection{Mappings based on gate teleportation}

We now move to our gate-teleportation mappings (Theorem \ref{thm:exact-mapping}). Again, we restrict our attention to teleporting $R_z(x)$ gates. This gate teleportation can easily be implemented using the gadget depicted in Fig.~\ref{fig:gadget}. It is easy to check that for an arbitrary input qubit $\ket{\psi} = \alpha \ket{0} + \beta \ket{1}$ and $\ket{+} = \frac{1}{\sqrt{2}}(\ket{0} + \ket{1})$, the state generated by this gadget before the computational basis measurement (and correction) is:
\begin{multline}\label{eq:gate-teleportation-gadget}
    \frac{1}{\sqrt{2}}(\alpha\ket{0} + \beta e^{\mathrm{i} x}\ket{1})\otimes \ket{0} + \frac{1}{\sqrt{2}}(\alpha e^{\mathrm{i} x}\ket{0} + \beta\ket{1})\otimes \ket{1} \\
    = \frac{1}{\sqrt{2}} \left( R_z(x)\ket{\psi}\otimes\ket{0} + e^{\mathrm{i} x} R_z(-x)\ket{\psi}\otimes\ket{1} \right)
\end{multline}
which results in the correct outcome $\ket{\psi'} = R_z(x) \ket{\psi}$ for a $\ket{0}$ measurement, and a state that can be corrected (up to a global phase) via a $R_z(2x)$ rotation, in case of  a $\ket{1}$ measurement.

\begin{figure}[t]
\begin{center}
\includegraphics[width=0.7\linewidth]{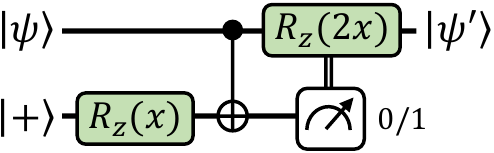}
\vspace{-0.5em}
\caption{A gate-teleportation gadget (inspired by the T-gadget in \cite{bravyi16}) that implements the unitary $\ket{\psi} \mapsto \ket{\psi'} = R_z(x) \ket{\psi}$.}
\label{fig:gadget}
\end{center}
\end{figure}

Putting aside the corrections required by this gadget, we note the interesting property that, when used to simulate every encoding gate in the data re-uploading circuit, this gadget moves all data-dependent parts of the circuit on additional ancilla qubits, essentially turning it into an explicit model. However, this gadget still requires data-dependent corrections (of the form $R_z(2h(\bm{x}))$) in the case of $\ket{1}$ measurement outcomes, which happen with probability $1/2$ for each gate teleportation. To get around this problem, we simply replace the computational basis measurement in the gadget by a projection $P_0 = \ket{0}\!\!\bra{0}$ on the $\ket{0}$ state. While these projections cannot be implemented deterministically in practice, the resulting model is still a valid explicit model, in the sense that including the projections $P_0^{\otimes D}$ in the observable $O_{\bm{\theta}}$, for $D$ uses of our gadget, still leads to a valid observable $O'_{\bm{\theta}}= O_{\bm{\theta}} \otimes P_0^{\otimes D}$.

However, given that each of these projections does not account for the re-normalization of the resulting quantum state (i.e., the factor $1/\sqrt{2}$ in Eq.~(\ref{eq:gate-teleportation-gadget})), this means that, in order to enforce $\text{Tr}[\rho'(\bm{x})O'_{\bm{\theta}}] = \text{Tr}[\rho_{\bm{\theta}}(\bm{x})O_{\bm{\theta}}],\ \forall \bm{x}, \bm{\theta}$, we need to multiply  $O'_{\bm{\theta}}$ by a factor of $2^{D/2}$. This implies that the evaluation of the resulting explicit model is exponentially harder than that of the original data re-uploading model, in the number of encoding gates $D$. As we show next, this factor can however be made arbitrarily close to $1$, by allowing each encoding gate/angle to be used more than once in the feature encoding.

To achieve this feat, we transform our previous gadget as to implement its data-dependent corrections using gate teleportation again. A single such nested use of gate teleportation now has probability $1-1/4 = 3/4$ of succeeding without corrections, as opposed to the probability $1/2$ of the previous gadget. For $N$ nested uses (see Fig.~\ref{fig:gadget-2}), the success probability is then boosted to $1-2^{-N}$, which can be made arbitrarily close to $1$. If we use this nested gadget for all $D$ encoding gates in the circuit, the probability that all of them are implemented successfully without corrections is then $p = (1-2^{-N})^D$. This probability $p$ can be made larger than $1-\delta'$, for any $\delta' > 0$, by choosing $N = \left\lceil \log_2 \left((1-\sqrt[D]{1-\delta'})^{-1}\right) \right\rceil \leq \left\lceil\log_2(D/\delta')\right\rceil$. This also leads to a normalization factor $\sqrt{p^{-1}} \leq \sqrt{(1-\delta')^{-1}}$, which can be made arbitrarily close to $1$. Note as well that this normalization factor is always known exactly, such that we can guarantee $\text{Tr}[\rho'(\bm{x})O'_{\bm{\theta}}] = \text{Tr}[\rho(\bm{x},\bm{\theta})O_{\bm{\theta}}],\ \forall \bm{x}, \bm{\theta}$.

\begin{figure}[t]
\begin{center}
\includegraphics[width=0.75\linewidth]{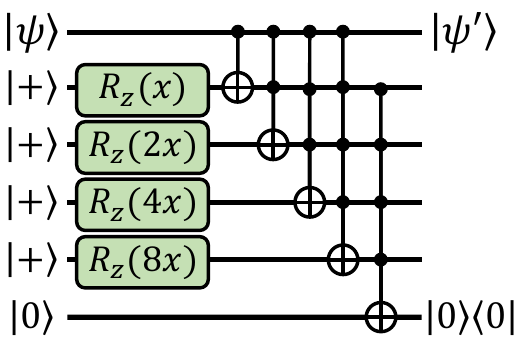}
\caption{A gate-teleportation gadget without data-dependent corrections and success probability $1-2^{-4} \approx 0.94$. The last qubit acts as a witness that at least one of the nested gate teleportations succeeded without the need for corrections.}
\label{fig:gadget-2}
\end{center}
\vspace{-2em}
\end{figure}

\subsection{Kernels resulting from our mappings}

In the main text, we showed how our illustrative mapping based on bit-string encodings of $\bm{x}$ resulted in trivial Kronecker-delta kernel functions and implicit models with very poor generalization performance. In this subsection, we derive a similar result for our gate-teleportation mappings.

We first note that these mappings lead to feature encodings of the form:
\begin{equation*}
    U_\phi(\bm{x}) \ket{0^{\otimes n+ND+D}} = \ket{0^{\otimes n+D}}\bigotimes_{\substack{1\leq i\leq  D\\ 1\leq j \leq N}} R_z(2^{j-1} h_i(\bm{x})) \ket{+}
\end{equation*}
for $D$ encoding gates with encoding angles $h_i(\bm{x})$, and using $N$ nested gate teleportations for each of these gates. While less generic than the feature states resulting from our binary encodings, these still generate kernels
\begin{equation}
    k(\bm{x}, \bm{x'}) = \prod_{i,j}\cos(2^{j-1}(h_i(\bm{x}) -h_i(\bm{x'})))^2
\end{equation}
that are again classically simulatable and $k(\bm{x}, \bm{x'}) \rightarrow \delta_{\bm{x}, \bm{x'}}$ for $ND \rightarrow \infty$.

Moreover, in the case where the angles $h_i(\bm{x})$ are linear functions of components $x_i$ of $\bm{x}$, we can directly apply Theorem 1 of \cite{kubler21} to show the following. For a number of encoding gates $D$ and a number of nested gate teleportations $N$ large enough (i.e., $ND$ larger than some $d_0$), and for a dataset that is at most polynomially large in $ND$, no function can be learned using the implicit model resulting from this kernel. Note that, for this theorem to be applicable, we also need to assume non-degenerate data distributions $\mu$ (i.e., that do not have support on single data points) that are separable on all components $x_i$ of $\bm{x}$, i.e., $\mu = \bigotimes_i \mu_i$, such that the mean embeddings $\rho_{\mu_i} = \int \rho_i(\bm{x}) \mu_i(d\bm{x})$ for each component $x_i$ are all mixed.

\subsection{Link to no-programming}
It may seem to the informed reader as though our mappings from data re-uploading to explicit models are in violation of the so-called no-programming theorem from quantum information theory. In this section, we will shortly outline this theorem and explain why our mappings do not violate it.

A programmable quantum processor is defined as a CPTP map  $\mathcal{C}:\mathcal{H}_\text{S}\otimes \mathcal{H}_\text{P}\rightarrow \mathcal{H}_\text{S}$, where $\mathcal{H}_\text{S}$ and $\mathcal{H}_\text{P}$ denote the system and program Hilbert spaces. The purpose of such a quantum processor is to implement unitary maps $U:\mathcal{H}_\text{S}\rightarrow \mathcal{H}_\text{S}$ where the information about $U$ is fed to the processor solely by a program state $\ket{\psi_\text{P}}$ (see Fig.~\ref{fig:processor}).
The no-programming theorem \cite{nielsen97, yang20} rules out the existence of perfect universal quantum processors, in the sense that there cannot exist a processor $\mathcal{C}$ that can implement infinitely many different unitary maps $U$ deterministically using finite-dimensional program states $\ket{\psi_{\text{P}}}$.

\begin{figure}[t]
\begin{center}
\includegraphics[width=0.7\linewidth]{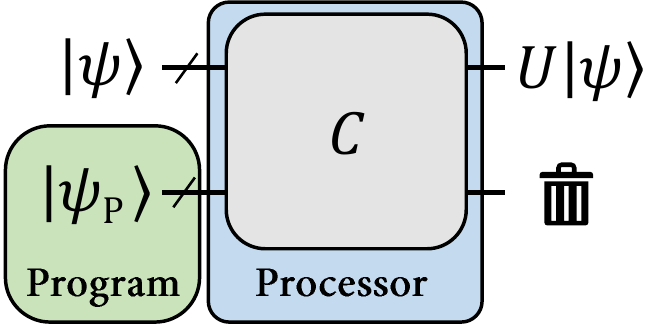}
\caption{A programmable quantum processor. A state $\ket{\psi_\text{P}}$ is fed to the processor as a program instructing the processor to implement a unitary map $U$ on another input state. The no-programming theorem states that programmable processors $\mathcal{C}$ capable of implementing any unitary map $U$ cannot exist.}
\label{fig:processor}
\end{center}
\vspace{-2em}
\end{figure}

The explicit models resulting from our mappings have properties that are reminiscent of quantum processors. In Fig.~\ref{fig:construction} for instance, the bit-string encodings $\ket{\widetilde{\bm{x}}}$ are used to implement data-encoding unitaries in an otherwise data-independent circuit. Thus, one may interpret these as the program states $\ket{\psi(\bm{x})}$ of a quantum processor $\mathcal{C}$ given by the rest of the circuit. Same goes for the quantum states $R_z(x_2)\ket{+}R_z(x_3)\ket{+}$ in Fig.~\ref{fig:example-exact-mapping}.

In light of the no-programming theorem, it is quite remarkable that these explicit models can ``program" a \textit{continuous} set of unitaries $\{U(\bm{x},\bm{\theta})= \prod_{\ell} V_\ell(\bm{\theta}) U_\ell(\bm{x})\}_{\bm{x}\in\mathbb{R}^d}$ (and particularly the unitaries $U_\ell(\bm{x})$ for $\ell \geq 2$). Note however that, in the case of our bit-string mappings, these unitaries are only implemented \emph{approximately} and that, in our gate-teleportation mappings, they are only implemented \emph{probabilistically}. Our gate-teleportation mappings are only exact from the point of view of \emph{models}, i.e., expectation values of observables, which are not covered by no-programming. The approximation errors $\delta>0$ and the normalization factors $(1-\delta')^{-1} \neq 1$ that we obtain in our mappings are indeed symptomatic of our inability to program data re-uploading unitaries both exactly and deterministically. On the other hand, our results show that, contrary to unitary maps, expectation values can be ``programmed'' exactly. 

\section{Explicit models are universal\\ function (family) approximators\label{sec:explicit-universal}}

From the universality of data re-uploading models as function approximators \cite{perez21,schuld21b} and our exact mappings from data re-uploading to explicit models, it trivially derives that explicit models are also universal function approximators. That is, for any integrable function $g\in\mathrm{L}([0,2\pi]^d)$ with a finite number of discontinuities, and for any $\varepsilon>0$, there exists an $n$-qubit feature encoding of the form $U_\phi(\bm{x}) = I \bigotimes_{i,j}R_z(2^{j}x_i)H$ and an observable $O$ such that $\abs{\text{Tr}[\rho(\bm{x})O] - g(\bm{x})} \leq \varepsilon, \forall \bm{x}$, for $\rho(\bm{x}) = U_{\phi}(\bm{x}) \ket{\bm{0}}\!\!\bra{\bm{0}} U_{\phi}^\dagger(\bm{x})$ (this result derives specifically from Theorem 2 in \cite{perez21}). A similar result was independently obtained in Ref.\ \cite{goto20}.

In this section, we show that this universal approximation property of explicit models also applies to computable hypothesis classes, i.e., function families of a known classical or quantum model. More precisely, we show that for any family $\{g_{\bm{\theta}}\}_{\bm{\theta}}$ of \emph{computable} functions $g_{\bm{\theta}}:[0,2\pi]^d \rightarrow \mathbb{R}$ specified by a Boolean or quantum circuit parametrized by a vector $\bm{\theta} \in [0,2\pi]^L$, and for any $\varepsilon > 0$, there exists an $n$-qubit feature encoding $U_\phi(\bm{x})$ using single-qubit rotations of the form $R_y(x_i)$ to encode $\bm{x}$, and a family of observables $O_{\bm{\theta}}$ parametrized by single-qubit rotations of the form $R_y(\theta_i)$, such that 
$\abs{\text{Tr}[\rho(\bm{x})O_{\bm{\theta}}] - g_{\bm{\theta}}(\bm{x})} \leq \varepsilon, \forall \bm{x}, \bm{\theta}$.

The proof of this result is rather simple. Using quantum amplitude (or phase) estimation \cite{brassard02} as a subroutine, we can, starting from a $\ket{\bm{0}}$ state and using $R_y(x_i)$ rotations, create a bit-string representation $\ket{\widetilde{\bm{x}}}$ of $\bm{x}$. This constitutes the feature encoding $U_{\phi}(\bm{x})$. On a different register, we create similarly a bit-string representation $\vert\widetilde{\bm{\theta}}\rangle$ of $\bm{\theta}$ using $R_y(\theta_i)$ rotations. These bit-strings, for an appropriately chosen precision of representation (which depends on the number of $R_y(x_i), R_y(\theta_i)$ rotations used), can then be used to approximate any computable function $g_{\bm{\theta}}(\bm{x})$ to some error $\varepsilon$.\footnote{Note that, for functions $g_{\bm{\theta}}$ that are computed using binary representations of $\bm{x}$ and $\bm{\theta}$, this construction can be made exact.} Indeed, when $g_{\bm{\theta}}$ is computed via a parametrized quantum circuit, we can use a similar construction to that depicted in Fig.~\ref{fig:construction} in the main text. When $g_{\bm{\theta}}$ is instead computed classically (e.g., using a neural network), we can either simulate this computation with a quantum circuit (see Sec.~3.2.5 of \cite{nielsen02}), or simply include it in the observables $O_{\bm{\theta}}$ as a post-processing of a computational basis measurement of $\ket{\widetilde{\bm{x}}}$ and $\vert\widetilde{\bm{\theta}}\rangle$.

The explicit models constructed in this proof may seem quite contrived and unnatural from the feature encodings and variational processing they use. Nonetheless, these constructions showcase how parametrized rotations, a natural building block to encode input data and to be used as variational gates, can leverage explicit quantum models to be universal function/model approximators.

\section{Beyond unitary feature encodings\label{sec:cptp-encoding}}

\begin{figure}[t]
\begin{center}
\includegraphics[width=0.55\linewidth]{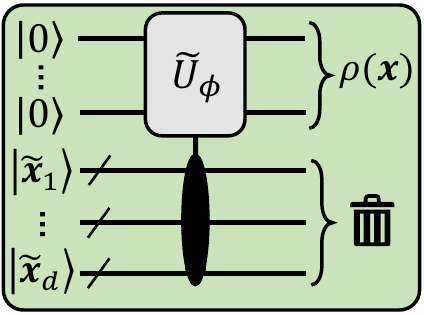}
\caption{A CPTP feature map based on bit-string encodings. Using data-independent controlled rotations, we can implement an approximation of any unitary feature encoding $U_\phi(\bm{x})$ by further tracing out the bit-string registers.}
\label{fig:construction-cptp}
\end{center}
\vspace{-2em}
\end{figure}

So far, in our definition of linear quantum models, we only considered \emph{unitary} feature encodings, i.e., where feature states are defined as $\rho(\bm{x}) = U_{\phi}(\bm{x}) \ket{\bm{0}}\!\!\bra{\bm{0}} U_{\phi}^\dagger(\bm{x})$ for a certain unitary map $U_{\phi}(\bm{x})$. In this section, we make the case that more general feature encodings, namely encodings for which the map $U_{\phi}(\bm{x})$ is allowed to be any completely positive trace preserving (CPTP) map, can lead to more interesting kernels $k(\bm{x},\bm{x'}) = \text{Tr}[\rho(\bm{x})\rho(\bm{x'})]$. This observation is in line with recent findings about quantum kernels derived from non-unitary feature encodings \cite{kubler21,huang20}.

We illustrate this point by focusing on the bit-string feature encoding $U_\phi(\bm{x})\ket{0^{\otimes n+dp}} = \ket{0^{\otimes n}}\bigotimes_{i=1}^{d} \ket{\widetilde{\bm{x}}_i}$ that we presented in the main text. We start by noting that augmenting this feature encoding with an arbitrary unitary $V$ always leads to the same kernel function:
\begin{align*}
    k(\bm{x},\bm{x'}) &= \abs{\bra{\bm{0}}U_\phi^\dagger(\bm{x'})V^\dagger VU_\phi(\bm{x})\ket{\bm{0}}}^2\\
    &= \abs{\bra{\bm{0}}U_\phi^\dagger(\bm{x'})U_\phi(\bm{x})\ket{\bm{0}}}^2
\end{align*}
given that $V^\dagger V = I$. If we however allow for a non-unitary operation such as tracing out part of the quantum system (which is allowed by CPTP maps), we can use this bit-string encoding to construct kernels $k(\bm{x},\bm{x'})$ that approximate \emph{virtually any} quantum kernel resulting from a unitary feature encoding on $n$ qubits. To see this, suppose for instance that we want to approximate the quantum kernel proposed by Havl\'{i}\v{c}ek \emph{et al.} \cite{havlivcek19} (resulting from the unitary feature encoding of Eq.~(\ref{eq:havlivcek})). In this case, we can, similarly to our construction in Fig.~\ref{fig:construction}, use data-independent rotations, controlled by the bit-string registers and acting on the $n$ working qubits, to simulate the data-dependent gates of Eq.~(\ref{eq:havlivcek}). Then, by tracing out the bit-string register, we effectively obtain an (arbitrarily good) approximation of the original feature encoding on the working qubits. This CPTP feature encoding is depicted in Fig.~\ref{fig:construction-cptp}.

\section{Details of the numerical simulations\label{sec:num-sim}}

In this section, we provide more details on the numerical simulations presented in the main text. We first describe how the training and testing data of the learning task are generated, then specify the quantum and classical models that we trained on this task.

\subsection{Dataset generation}

\subsubsection{Generating data points}

We generate our training and testing data by pre-processing the fashion MNIST dataset \cite{xiao17}. All $28\times28$-pixels grayscale images in the dataset are first subject to a dimensionality reduction via principal component analysis (PCA), where only their $n$ principal components are preserved, $2\leq n \leq 12$. This PCA gives rise to data vectors $\bm{x} \in \mathbb{R}^{n}$ that are further normalized component-wise as to each be centered around $0$ and have a standard deviation of $1$. To create a training set, we sample $M=1000$ of these vectors without replacement. A validation set and a test set, of size $M'=100$ each, are sampled similarly from the pre-processed fashion MNIST testing data.

\subsubsection{Generating labels}

The labels $g(\bm{x})$ of the data points $\bm{x}$ in the training, validation and test sets are generated using the explicit models depicted in Fig.~\ref{fig:vqc-fashion}, for a number of qubits $n$ equal to the dimension of $\bm{x}$, and uniformly random parameters $\bm{\theta}\in[0,2\pi]^{3nL}$.\\
The feature encoding takes the form \cite{havlivcek19}:
\begin{equation}\label{eq:havlivcek}
    U_\phi(\bm{x}) \ket{0^{\otimes n}} = U_z(\bm{x})H^{\otimes n}U_z(\bm{x})H^{\otimes n} \ket{0^{\otimes n}}
\end{equation}
for 
\begin{equation}
    U_z(\bm{x}) = \exp(-\mathrm{i}\pi\left[\sum_{i=1}^{n} x_i Z_i + \sum_{\substack{j=1,\\ j>i}}^{n} x_ix_j Z_iZ_j\right]).
\end{equation}
As for the variational unitaries $V(\bm{\theta})$, these are composed of $L$ layers of single-qubit rotations $R(\bm{\theta}_{i,j})$ on each of the qubits, interlaid with $CZ = \ket{1}\!\!\bra{1} \otimes Z$ gates between nearest neighbours in the circuit. We choose the number of layers $L$ as a function of the number of qubits $n$ in the circuit, such that the number of parameters ($3nL$) is approximately $90$ at all system sizes. Specifically, from $n = 2$ to $12$, we have $L = 15, 10, 7, 6, 5, 4, 4, 3, 3, 3, 3$, respectively.

Finally, the labels of the data points are specified by the expectations values
\begin{equation}\label{eq:generating-explicit}
    g(\bm{x}) = w_{\mathcal{D},\bm{\theta}}\text{Tr}[\rho(\bm{x}) V(\bm{\theta})^\dagger Z_1 V(\bm{\theta})]
\end{equation}
where $w_{\mathcal{D},\bm{\theta}}$ is a re-normalization factor that sets the standard deviation of these labels to $1$ over the training set.

\begin{figure}[t]
\includegraphics[width=\linewidth]{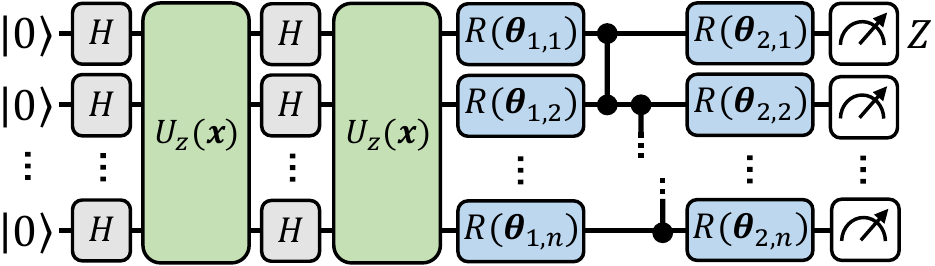}
\caption{The explicit model used in our numerical simulations. We use the feature encoding proposed by Havl\'{i}\v{c}ek \emph{et al.} \cite{havlivcek19} (see Eq.~(\ref{eq:havlivcek})), followed by a hardware-efficient variational circuit, where arbitrary single-qubit rotations $R(\bm{\theta}_{i,j}) = R_x(\theta_{i,j,0})R_y(\theta_{i,j,1})R_z(\theta_{i,j,2})$ on each qubit are interlaid with nearest-neighbour $CZ$ gates, for $L$ layers (here $L=2$). Finally, the expectation value of a $Z_1$ observable (with a re-normalization) assigns labels to input data $\bm{x} \in \mathbb{R}^n$.}
\label{fig:vqc-fashion}
\vspace{-1em}
\end{figure}

\subsubsection{Evaluating performance}
We evaluate the training loss of a hypothesis function $f$ using the mean squared error
\begin{equation}\label{eq:training-loss}
	\widehat{\mathcal{L}}(f) = \frac{1}{M}\sum_{m=1}^{M} \left( f(\bm{x}^{(m)})-g(\bm{x}^{(m)}) \right)^2
\end{equation}
on the training data. The test loss (indicative of the expected loss) is evaluated similarly on the test data (of size $M'$).

\subsection{Quantum machine learning models}

In our simulations, we compare the performance of two types of quantum machine learning models.

First, we consider explicit models from the same variational family as those used to label the data (i.e., depicted in Fig.~\ref{fig:vqc-fashion}), but initialized with \emph{different} variational parameters $\bm{\theta}\in[0,2\pi]^{3nL}$, now sampled according to a independent normal distributions $\mathcal{N}(0,0.05)$. As opposed to the generating functions of Eq.~(\ref{eq:generating-explicit}), we replace the observable weight $w_{\mathcal{D},\bm{\theta}}$ by a free parameter $w$, initialized to $1$ and trained along the variational parameters $\bm{\theta}$. We train the explicit models for $500$ steps of gradient descent on the training loss of Eq.~(\ref{eq:training-loss}). For this, we use an ADAM optimizer \cite{kingma14} with a learning rate $\alpha_{\bm{\theta}} = 0.01$ for the variational parameters $\bm{\theta}$ and a learning rate $\alpha_w = 0.1$ for the observable weight $w$.

Second, we also consider implicit models that rely on the same feature encoding $U_\phi(\bm{x})$ (Eq.~(\ref{eq:havlivcek})) as the explicit models. I.e., these take the form
\begin{equation}
    f_{\bm{\alpha},\mathcal{D}}(\bm{x}) = \text{Tr}[\rho(\bm{x}) O_{\bm{\alpha},\mathcal{D}}]
\end{equation}
for the same encodings $\rho(\bm{x}) = U_{\phi}(\bm{x}) \ket{\bm{0}}\!\!\bra{\bm{0}} U_{\phi}^\dagger(\bm{x})$, and an observable $O_{\bm{\alpha},\mathcal{D}}$ given by Eq.~(\ref{eq:implicit-obs}) in the main text. We train their parameters $\bm{\alpha}$ using the KernelRidge regression package of scikit-learn \cite{pedregosa11}. In the numerical simulations of Fig.~\ref{fig:fashion_regression}, we use an unregularized training loss, i.e., that of Eq.~(\ref{eq:training-loss}). The learning performance of the implicit models trained with regularization is presented in Appendix \ref{sec:add-num}.

\subsection{Classical machine learning models}

We additionally compare the performance of our quantum machine learning models to a list of classical models, identical to that of Huang \emph{et al.} \cite{huang20}. For completeness, we list these models here, and the hyperparameters they were trained with. All of these models were trained using the default specifications of scikit-learn \cite{pedregosa11}, unless stated otherwise.

\begin{itemize}[leftmargin=4mm]
    \item Neural network: We perform a grid search over two-layer feed-forward neural networks with hidden layers of size
    \begin{equation*}
        h \in \{ 10, 25, 50, 75, 100, 125, 150, 200 \}.
    \end{equation*}
    We use the MLPRegressor package with a maximum number of learning steps $\text{max\_iter} = 500$.
    \item Linear kernel method: We perform a grid search over the regularization parameter
    \begin{multline*}
        C \in \{0.006, 0.015, 0.03, 0.0625, 0.125, 0.25, 0.5, 1.0, 2.0,\\ 4.0, 8.0, 16.0, 32.0, 64.0, 128.0, 256, 512, 1024\}.
    \end{multline*}
    We select the best performance between the SVR and KernelRidge packages (both using the linear kernel).
    \item Gaussian kernel method: We perform a grid search over the regularization parameter
    \begin{multline*}
        C \in \{0.006, 0.015, 0.03, 0.0625, 0.125, 0.25, 0.5, 1.0, 2.0,\\ 4.0, 8.0, 16.0, 32.0, 64.0, 128.0, 256, 512, 1024\},
    \end{multline*}
    and the RBF kernel hyperparameter
    \begin{equation*}
        \gamma \in \{0.25, 0.5, 1.0, 2.0, 3.0, 4.0, 5.0, 20.0\}  / (n \text{Var}[\bm{x}]),
    \end{equation*}
    where $\text{Var}[\bm{x}]$ is the variance of all the components $x_i$, for all the data points $\bm{x}$ in the training set.\\
    We select the best performance between the SVR and KernelRidge packages (both using the RBF kernel).
    \item Random forest: We perform a grid search over the individual tree depth
    \begin{equation*}
        \text{max\_depth} \in \{2, 3, 4, 5\},
    \end{equation*}
    and the number of trees
    \begin{equation*}
        \text{n\_trees} \in \{25, 50, 100, 200, 500\}.
    \end{equation*}
    We use the RandomForestRegressor package.
    \item Gradient boosting: We perform a grid search over the individual tree depth
    \begin{equation*}
        \text{max\_depth} \in \{2, 3, 4, 5\},
    \end{equation*}
    and the number of trees
    \begin{equation*}
        \text{n\_trees} \in \{25, 50, 100, 200, 500\}.
    \end{equation*}
    We use the GradientBoostingRegressor package.
    \item Adaboost: We perform a grid search over the number of estimators
    \begin{equation*}
        \text{n\_estimators} \in \{25, 50, 100, 200, 500\}.
    \end{equation*}
    We use the AdaBoostRegressor package.
\end{itemize}

\begin{figure}[t]
\includegraphics[width=0.97\linewidth]{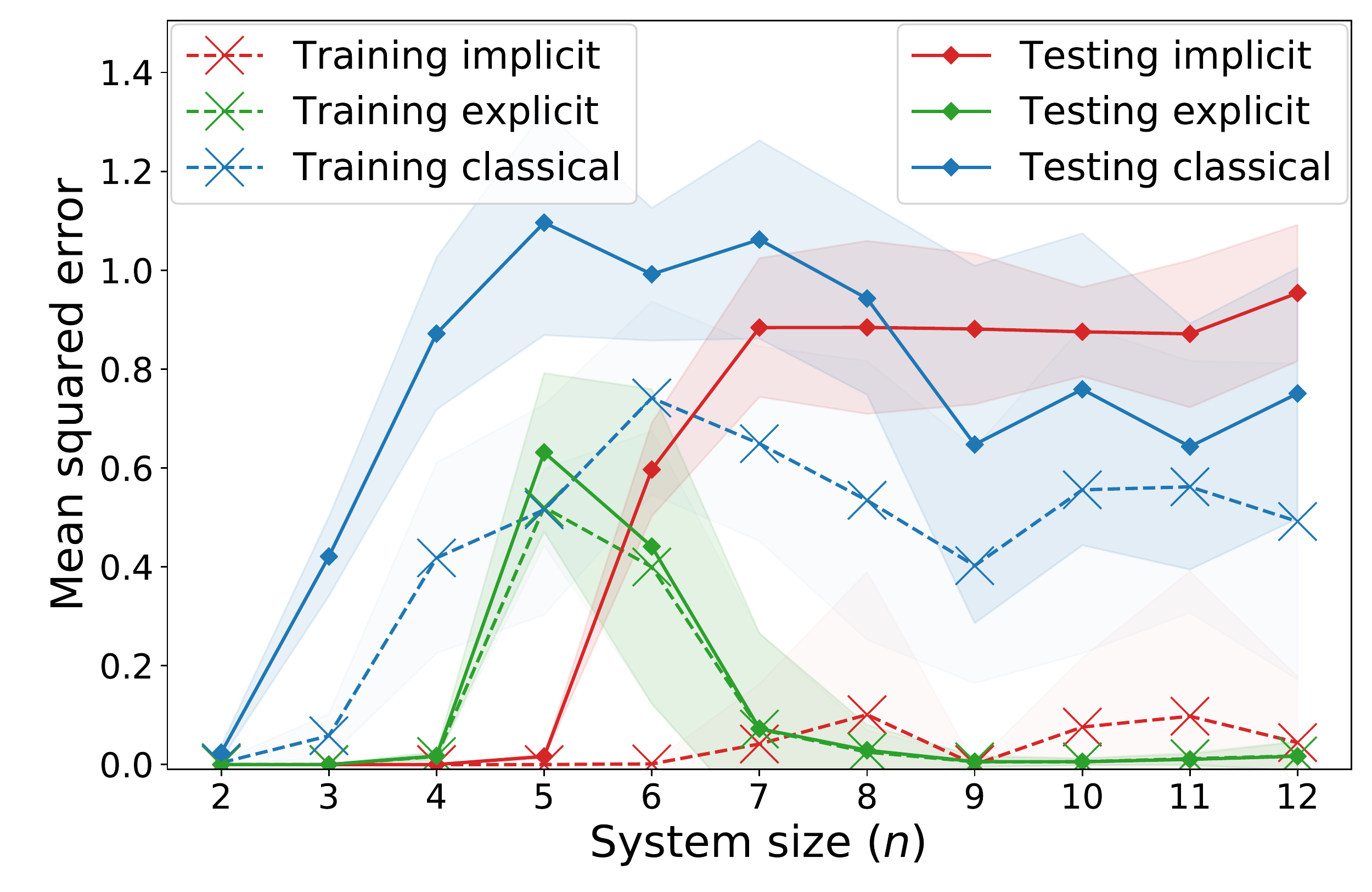}
\vspace{-1em}
\caption{Best performance of implicit models for different regularization strengths.}
\label{fig:fashion_regression2}
\vspace{-1em}
\end{figure}

At each system size, we keep the learning performance of the model with the lowest validation loss and plot its test loss.

\begin{figure}[t]
\includegraphics[width=0.97\linewidth]{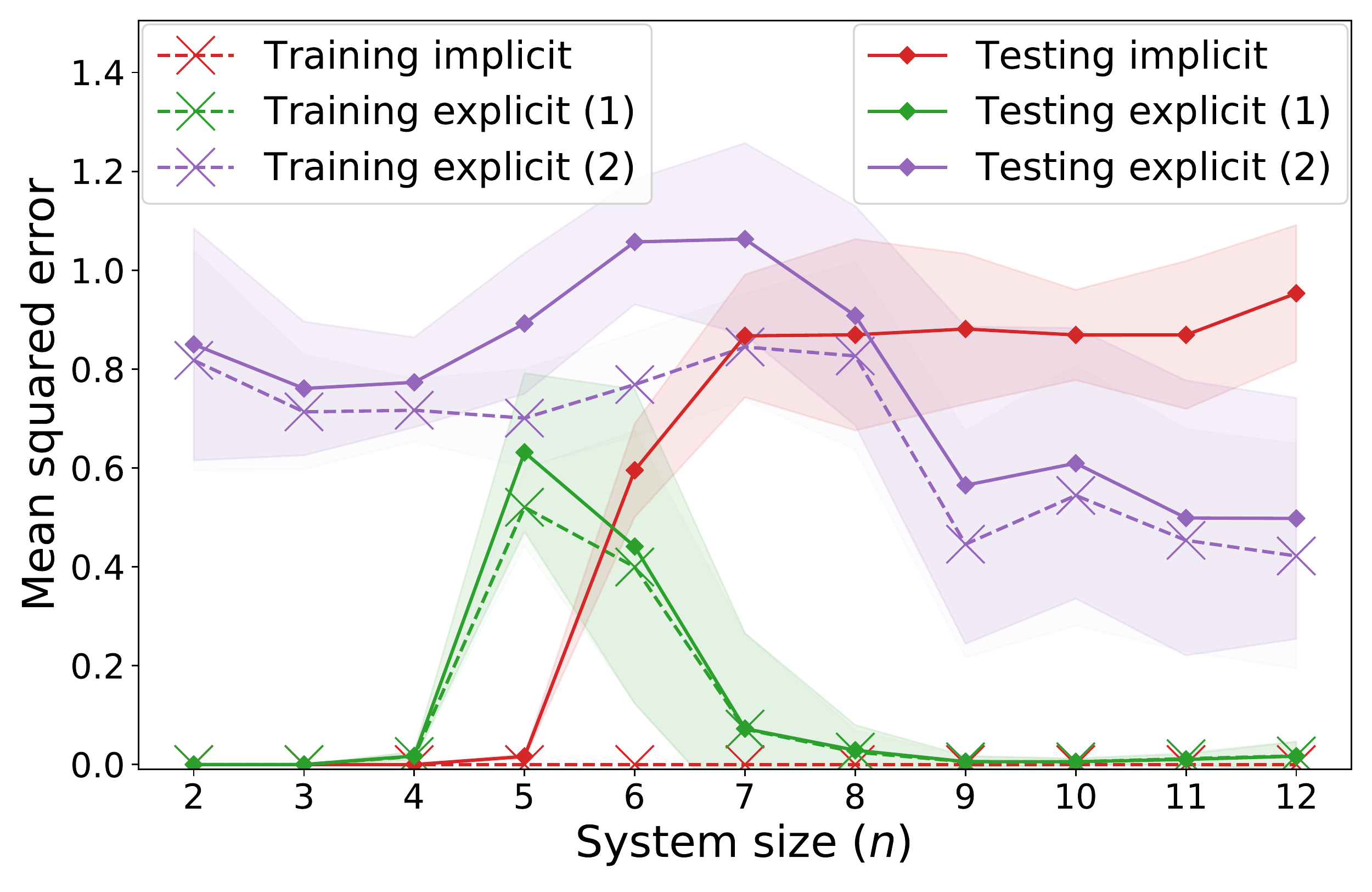}
\vspace{-1em}
\caption{Regression performance of explicit models from the same variational family as the models generating the data labels (1) and from a different variational family (2).}
\label{fig:fashion_regression3}
\vspace{-1em}
\end{figure}

\section{Additional numerical simulations\label{sec:add-num}}

In this section, we defer the results of our additional numerical simulations, in support of the claims made in the main text.

We first show that regularization does not improve the learning performance of the implicit models. In Fig.~\ref{fig:fashion_regression2} we plot the best test losses we obtained using regularization strengths $\lambda = 1/(2C)$ for $C \in \{0.006, 0.015, 0.03, 0.0625, 0.125, 0.25, 0.5, 1.0, 2.0,\allowbreak 4.0, 8.0, 16.0, 32.0, 64.0, 128.0, 256, 512, 1024\}$.

We also show that a variational family of observables not suited for the learning task can lead to poor learning performance. We illustrate this phenomenon by training explicit models constructed using the same feature encoding as those generating the data labels, but different variational unitaries $V(\bm{\theta})$ (see Eq.~(\ref{eq:generating-explicit})). We take these variational unitaries to resemble a Trotter evolution of a 1D-Heisenberg model with circular boundary conditions:
\begin{equation*}
    V(\bm{\theta}) = \prod_{i=1}^{L} \left( \prod_{j=1}^{n} e^{\mathrm{i}\theta_{i,j,0} Z_j Z_{j+1}} e^{\mathrm{i}\theta_{i,j,1} Y_j Y_{j+1}} e^{\mathrm{i}\theta_{i,j,2} X_j X_{j+1}} \right)
\end{equation*}
for the same number of layers $L$ and the same number of parameters $\bm{\theta} \in [0,2\pi]^{3nL}$ as the generating model. These are followed by a (weighted) $Z_1$ observable. Their learning performance is presented in Fig.~\ref{fig:fashion_regression3}.

\section{Learning separations between quantum models\label{sec:learn-sep}}

In this section, we establish rigorous learning separations between quantum learning models, using tools from recent works in classical machine learning \cite{daniely20,hsu21}. More specifically, we show the existence of a regression task specified by its input dimension $d \in \mathbb{N}$, such that, provably: (i) it can be solved exactly by a data re-uploading model acting on a single qubit and using $\mathcal{O}(\log(d))$ training samples, (ii) linear quantum models can also be sample efficient but require $\Omega(d)$ qubits to achieve a non-trivial expected loss, (iii) implicit models require both $\Omega(d)$ qubits and $\Omega(2^d)$ training samples to achieve this.

\subsection{Learning parity functions}

We consider the same learning task as Daniely \& Malach \cite{daniely20}, that is, learning $k$-sparse parity functions. These functions have discrete input and output spaces, $\mathcal{X} = \{-1,+1\}^d$ and $\mathcal{Y} = \{-1, +1\}$, respectively. $d \in \mathbb{N}$ specifies the dimension of the inputs $\bm{x} \in \mathcal{X}$, and an additional parameter $0 \leq k \leq d$ specifies the family of so-called $k$-sparse parity functions: $$g_A(\bm{x}) = \prod_{i \in A}x_i$$ for $A \subset [d]$ and $|A| = k$. That is, for a given subset $A$ of the input components $[d]$, the function $g_A$ returns the parity $\pm 1$ of these components for any input $\bm{x} \in \mathcal{X}$. $A$ can be any subset of $[d]$ of size $k$, which gives us a family of functions $\{ g_A \}_{A \subset [d], \abs{A} = k}$ that we take to be our concept class (for a certain $k$ specified later). These functions have the interesting property that they are all linearly independent (despite potentially being an exponentially large family), which is the essential property we'll be using to derive our separation results.

Daniely \& Malach \cite{daniely20} show that, for an appropriate choice of input distribution and loss function, $k$-sparse parity functions cannot be approximated by any \emph{polynomial-size} linear model (for $k \leq \frac{d}{16}$), while a depth-2 neural network with hidden layers polynomially large in $k$ can learn these (almost) perfectly. The size of the linear model is defined as the dimension of its feature space $\mathcal{F}$ multiplied by the norm of its weight vector $\norm{w}_{\mathcal{F}}$.

In our results, we rely instead on a powerful theorem by Hsu \cite{hsu21} to derive simpler separations results in terms of the \emph{dimension} of linear models, rather than their size. This theorem, stated in the next subsection, makes it natural to consider an input distribution $\mathcal{D_{\mathcal{X}}}$ that is simply the uniform distribution over the data space $\mathcal{X}$, and the mean-squared error $\mathcal{L}_{A}(f) = \mathbb{E}_{\bm{x}\sim\mathcal{D}_{\mathcal{X}}}(f(\bm{x})-g_{A}(\bm{x}))^2$ to be the expected loss for which we establish our bounds. 

\subsection{Learning performance of quantum models}

This section is organized as follows: we first describe our lower bound results for linear quantum models, which are naturally derived for learning parities w.r.t.~to the uniform input distribution $\mathcal{D}_{\mathcal{X}}$. We then move to our upper bound results for data re-uploading models, which make us consider a different input distribution $\mathcal{D}_{A}$ in order to achieve a largest separation with linear models. Together, these bounds give us our main Theorem \ref{thm:sep-mother}.

\begin{figure*}[t]
\vspace{-1em}
\includegraphics[width=0.97\linewidth]{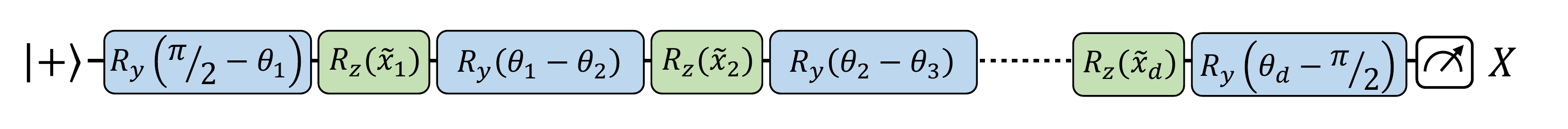}
\vspace{-1em}
\caption{Data re-uploading circuit used to learn parity functions. The encoding angles are $\widetilde{x}_i = \frac{\pi}{2}(x_i-1)$, $\forall i = 1, \ldots, d$. Note that $R_y(\frac{\pi}{2})$ implements the transformation $\ket{\pm} \rightarrow \ket{1/0}$ and $R_y(-\frac{\pi}{2})$ implements $\ket{1/0} \rightarrow \ket{\pm}$, such that, to represent any parity function, we only need $\theta_i \in \{0,\frac{\pi}{2}\}, \forall i = 1, \ldots, d$.}
\label{fig:parity-circuit}
\vspace{-1em}
\end{figure*}

\subsubsection{Lower bounds for linear models}

As mentioned above, our separation results derive from dimensionality arguments on the family of functions that can be represented by linear models. To start, let us consider the Hilbert space $\mathcal{H}= L^2(\mathcal{D}_{\mathcal{X}})$ of real-valued functions that are square-summable with respect to the probability space $(\mathcal{X},2^{\mathcal{X}}, \mathcal{D}_{\mathcal{X}})$. The inner product associated to this Hilbert space is $\langle f,g \rangle_{\mathcal{H}}=\frac{1}{|\mathcal{X}|} \sum_{\bm{x} \in \mathcal{X}} f(\bm{x})g(\bm{x})$. For any $k \in \{0, \ldots, d\}$, $k$-sparse parity functions belong to $\mathcal{H}$, and they are moreover orthogonal functions of this Hilbert space. As for the functions generated by a linear model $f(\bm{x}) = \langle\phi(\bm{x}), w \rangle_{\mathcal{F}}$ (for a fixed feature encoding $\phi$), these are also functions of the Hilbert space $\mathcal{H}$. More specifically, they are contained in a finite-dimensional subspace $\mathcal{W} = \text{span}\{ \langle\phi(\bm{x}), w_j \rangle_{\mathcal{F}} \}_{w_j}$ of $\mathcal{H}$, for $\{w_j\}_j$ a basis of $\mathcal{F}$ (with respect to the inner product $\langle \cdot, \cdot \rangle_{\mathcal{F}}$).\\
To establish our bounds, all we need to do now is: a) relate the dimension $\dim(\mathcal{W})$ of this subspace to the expected loss of the linear model, and b) upper bound $\dim(\mathcal{W})$ in terms of the number of qubits or data samples accessible to this model.

To get a), we use directly the following theorem:
\begin{theorem}[Theorem 1 in \cite{hsu21}, Theorem 29 in \cite{hsu21b}]\label{thm:dim-hsu}
Let $\varphi_1, \ldots, \varphi_N$ be orthogonal functions in a Hilbert space $\mathcal{H}$ such that $\norm{\varphi_i}^2_{\mathcal{H}} = 1$ for all $i = 1, \ldots, N$, and let $\mathcal{W}$ be a finite subspace of $\mathcal{H}$. Then, for:
$$\varepsilon = \frac{1}{N}\sum_{i=1}^N \left[ \inf_{f \in \mathcal{W}} \norm{f-\varphi_i}^2_{\mathcal{H}} \right], $$
we have
$$\textnormal{dim}(\mathcal{W}) \geq N(1-\varepsilon).$$
\end{theorem}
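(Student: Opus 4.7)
The plan is to reduce the statement to a standard Hilbert-space identity by using orthogonal projection. Because $\mathcal{W}$ is finite-dimensional (hence closed), for each $\varphi_i$ there is a unique minimizer of $f \mapsto \|f-\varphi_i\|_{\mathcal{H}}^2$ over $f \in \mathcal{W}$, namely the orthogonal projection $P\varphi_i$. The Pythagorean identity then gives
$$\inf_{f \in \mathcal{W}} \|f-\varphi_i\|_{\mathcal{H}}^2 \;=\; \|\varphi_i\|_{\mathcal{H}}^2 - \|P\varphi_i\|_{\mathcal{H}}^2 \;=\; 1 - \|P\varphi_i\|_{\mathcal{H}}^2,$$
so averaging over $i$ yields $\varepsilon = 1 - \frac{1}{N}\sum_{i=1}^N \|P\varphi_i\|_{\mathcal{H}}^2$, i.e.\ $\sum_{i=1}^N \|P\varphi_i\|_{\mathcal{H}}^2 = N(1-\varepsilon)$. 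Thus the theorem is equivalent to the bound $\sum_{i=1}^N \|P\varphi_i\|_{\mathcal{H}}^2 \leq \dim(\mathcal{W})$.

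Next I would pick an orthonormal basis $e_1,\ldots,e_D$ of $\mathcal{W}$ with $D=\dim(\mathcal{W})$, expand $P\varphi_i = \sum_{j=1}^D \langle e_j,\varphi_i\rangle_{\mathcal{H}}\, e_j$, and swap the order of summation:
$$\sum_{i=1}^N \|P\varphi_i\|_{\mathcal{H}}^2 \;=\; \sum_{j=1}^D \sum_{i=1}^N |\langle e_j, \varphi_i\rangle_{\mathcal{H}}|^2.$$
Since the hypothesis gives an orthonormal family $\{\varphi_i\}_{i=1}^N$ in $\mathcal{H}$, Bessel's inequality applied with respect to this family yields $\sum_{i=1}^N |\langle e_j, \varphi_i\rangle_{\mathcal{H}}|^2 \leq \|e_j\|_{\mathcal{H}}^2 = 1$ for each $j$. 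Summing the $D$ such inequalities gives $\sum_{i=1}^N \|P\varphi_i\|_{\mathcal{H}}^2 \leq D$, and combining with the reduction above produces $N(1-\varepsilon) \leq \dim(\mathcal{W})$.

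There is no serious obstacle: the whole argument is a two-line composition of orthogonal projection with Bessel's inequality, and the only care needed is to note that the roles of the ``signal family'' $\{\varphi_i\}$ and the ``basis family'' $\{e_j\}$ are swapped when applying Bessel (i.e., we use the orthonormality of the $\varphi_i$'s, not of the $e_j$'s, to get the crucial per-$j$ bound). Everything else — attainment of the infimum, the Pythagorean split, and the basis expansion of $P$ — is immediate from $\mathcal{W}$ being finite-dimensional and closed under $\langle\cdot,\cdot\rangle_{\mathcal{H}}$.
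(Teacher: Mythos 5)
Your proof is correct: the reduction via the orthogonal projection $P$ onto $\mathcal{W}$ and the Pythagorean identity, followed by Bessel's inequality applied to the orthonormal family $\{\varphi_i\}$ (equivalently, the bound $\sum_i \langle \varphi_i, P\varphi_i\rangle_{\mathcal{H}} \leq \operatorname{tr}(P) = \dim(\mathcal{W})$), is exactly the standard argument behind this result. The paper itself does not prove the theorem but imports it from \cite{hsu21,hsu21b}, and your argument coincides with that cited proof, so there is nothing to add.
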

By definition here, $\norm{f-\varphi_i}^2_{\mathcal{H}} = \frac{1}{|\mathcal{X}|} \sum_{\bm{x} \in \mathcal{X}} (f(\bm{x})-\varphi_i(\bm{x}))^2$ is the mean-squared error $\mathcal{L}_i (f)$ of $f$ with respect to $\varphi_i$.
Given that, for $\{\varphi_i\}_{i\in[N]} = \{ g_A \}_{A \subset [d], \abs{A} = k}$, we have $N = \binom{d}{k}$, this theorem gives us a combinatorial lower bound on $\textnormal{dim}(\mathcal{W})$ when the linear model achieves a non-trivial average expected loss $\varepsilon < 1$ (otherwise obtained for $f(\bm{x})=0,\ \forall \bm{x} \in \mathcal{X}$).\\

We now move to point b). Note here that, in order to upper bound $\textnormal{dim}(\mathcal{W})$, all we need to do is upper bound the number of independent vectors in the $\text{span}\{ \langle\phi(\bm{x}), w_j \rangle_{\mathcal{F}} \}_{w_j} = \mathcal{W}$, hence, equivalently, the number of basis vectors $w_j$ of $\mathcal{F}$. This can easily be done when we know the number of qubits on which the linear model acts. Indeed, for an $n$-qubit model, $\mathcal{F}$ is the space of $2^n \times 2^n$ Hermitian operators, and hence $\textnormal{dim}(\mathcal{W}) \leq 2^{2n}$. This leads us to our first lemma.

\begin{lemma}\label{lem:sep-linear}
There exists a regression task specified by an input dimension $d \in \mathbb{N}$, a function family $\{g_A : \{-1,1\}^d \rightarrow \{-1,1\}\}_{A\subset[d], |A|=\lfloor d/2 \rfloor}$, and an input distribution $\mathcal{D}_{\mathcal{X}}$, such that, to achieve an average mean-squared error
$$\mathbb{E}_{A} \big[ \inf_{f \in \mathcal{W}} \norm{f-g_A}^2_{L^2(\mathcal{D}_{\mathcal{X}})} \big] \leq \varepsilon$$
any linear quantum model needs to act on
$$n \geq \frac{d}{4} + \frac{1}{2}\log_2(1-\varepsilon)$$
qubits.
\end{lemma}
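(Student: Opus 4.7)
The plan is to apply Hsu's dimensionality theorem (Theorem \ref{thm:dim-hsu}) directly to the parity family with $k = \lfloor d/2 \rfloor$ under the uniform input distribution $\mathcal{D}_{\mathcal{X}}$ on $\{-1,+1\}^d$. First I would verify that the $N = \binom{d}{\lfloor d/2 \rfloor}$ parities $g_A(\bm{x}) = \prod_{i\in A}x_i$ form an orthonormal family in $L^2(\mathcal{D}_{\mathcal{X}})$: each $g_A$ is $\pm 1$-valued so $\|g_A\|^2 = 1$, and for $A \neq B$ one has $\langle g_A, g_B\rangle = \mathbb{E}_{\bm{x}}\!\left[\prod_{i\in A\triangle B} x_i\right]=0$ by independence of the coordinates under the uniform distribution. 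This is just the standard observation that the parities form the Walsh basis of the Boolean hypercube.

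Next I would bound the dimension of the subspace $\mathcal{W}\subset L^2(\mathcal{D}_{\mathcal{X}})$ spanned by the functions an $n$-qubit linear model can realize. Because $f(\bm{x}) = \textnormal{Tr}[\rho(\bm{x})O]$ depends real-linearly on the Hermitian operator $O$ and the space of $n$-qubit Hermitian operators has real dimension $4^n$, we immediately obtain $\dim(\mathcal{W}) \leq 4^n$. Crucially, this bound applies uniformly to every linear quantum model on $n$ qubits regardless of whether the observable is obtained via an explicit parametrization $O_{\bm{\theta}}$ or an implicit combination $O_{\bm{\alpha},\mathcal{D}}$, since both live in $\mathcal{F}$.

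Combining the two ingredients via Theorem \ref{thm:dim-hsu} then gives
\begin{equation*}
  4^n \;\geq\; \dim(\mathcal{W}) \;\geq\; \binom{d}{\lfloor d/2\rfloor}(1-\varepsilon).
\end{equation*}
Using the standard central-binomial estimate $\binom{d}{\lfloor d/2\rfloor}\geq 2^{d/2}$ (which follows from Stirling, or for $d$ large enough from $\binom{d}{\lfloor d/2\rfloor}\geq 2^d/(d+1)$), taking $\log_2$ of both sides yields $2n \geq d/2 + \log_2(1-\varepsilon)$, i.e.\ $n \geq d/4 + \tfrac{1}{2}\log_2(1-\varepsilon)$, as claimed. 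I do not expect a real obstacle: the non-trivial analytic content is fully packaged inside Theorem \ref{thm:dim-hsu}, and the rest is a short dimension-counting sandwich plus an elementary binomial estimate. The only design choice worth flagging is the sparsity level $k=\lfloor d/2\rfloor$, which is picked to maximise the combinatorial factor $\binom{d}{k}$ and thereby the separation; any other $k$ linear in $d$ would still give $n=\Omega(d)$ but with a smaller constant.
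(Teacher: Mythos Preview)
Your proposal is correct and follows essentially the same route as the paper: apply Theorem~\ref{thm:dim-hsu} to the orthonormal family of $\lfloor d/2\rfloor$-sparse parities, combine the resulting bound $\dim(\mathcal{W})\geq \binom{d}{\lfloor d/2\rfloor}(1-\varepsilon)$ with $\dim(\mathcal{W})\leq 2^{2n}$, and invoke $\binom{d}{\lfloor d/2\rfloor}\geq 2^{d/2}$. The paper's proof is more terse but identical in substance; your added remarks on orthonormality and the binomial estimate are correct elaborations.
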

\begin{proof}
For $k = |A| = \lfloor\frac{d}{2}\rfloor$, we have $N = \binom{d}{k} \geq 2^{d/2}$. From Theorem \ref{thm:dim-hsu}, we have $\textnormal{dim}(\mathcal{W}) \geq 2^{d/2}(1-\varepsilon)$, and from our previous observation, $\textnormal{dim}(\mathcal{W}) \leq 2^{2n}$.
\end{proof}

In the case of implicit linear models, note that we can bound $\textnormal{dim}(\mathcal{W})$ even more tightly. This is because the weight vector $w \in \mathcal{F}$, or equivalently the observable of the implicit quantum model, is expressed as a linear combination of embedded data samples $\phi(\bm{x}^{(i)})=\rho(\bm{x}^{(i)})$. Therefore, the number of independent vectors in the $\text{span}\{ \langle\phi(\bm{x}), w_j \rangle_{\mathcal{F}} \}_{w_j} = \mathcal{W}$ is upper bounded by the number of data points $M$ in the training set of the implicit model. This gives us $\textnormal{dim}(\mathcal{W}) \leq \min(2^{2n}, M)$ and the following lemma:
\begin{lemma}\label{lem:sep-implicit}
There exists a regression task specified by an input dimension $d \in \mathbb{N}$, a function family $\{g_A : \{-1,1\}^d \rightarrow \{-1,1\}\}_{A\subset[d], |A|=\lfloor d/2 \rfloor}$, and an input distribution $\mathcal{D}_{\mathcal{X}}$, such that, to achieve an average mean-squared error
$$\mathbb{E}_{A} \big[ \inf_{f \in \mathcal{W}} \norm{f-g_A}^2_{L^2(\mathcal{D}_{\mathcal{X}})} \big] \leq \varepsilon$$
any implicit quantum model needs to use
$$M \geq 2^{d/2}(1-\varepsilon)$$
data samples.
\end{lemma}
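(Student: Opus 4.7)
\textbf{Proof proposal for Lemma \ref{lem:sep-implicit}.} The plan is to follow the same dimension-counting strategy used for Lemma \ref{lem:sep-linear}, but replace the ``number-of-qubits'' bound $\dim(\mathcal{W})\le 2^{2n}$ with a ``training-set-size'' bound $\dim(\mathcal{W})\le M$, and then apply Theorem \ref{thm:dim-hsu} with the family of $\lfloor d/2\rfloor$-sparse parity functions as the orthogonal system.

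First I would fix $k=\lfloor d/2\rfloor$, take $\{g_A\}_{A\subset[d],\,|A|=k}$ as the concept class, and let $\mathcal{D}_{\mathcal{X}}$ be the uniform distribution on $\{-1,+1\}^d$. Under this distribution the parities satisfy $\|g_A\|^2_{L^2(\mathcal{D}_{\mathcal{X}})}=1$ and $\langle g_A,g_B\rangle_{L^2(\mathcal{D}_{\mathcal{X}})}=\mathbb{E}_{\bm{x}}[g_{A\triangle B}(\bm{x})]=0$ for $A\neq B$, so they form an orthonormal family of size $N=\binom{d}{k}\ge 2^{d/2}$. This is precisely the setup needed to invoke Theorem \ref{thm:dim-hsu}.

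Second, for any implicit model built from a training set $\mathcal{D}=\{\bm{x}^{(1)},\ldots,\bm{x}^{(M)}\}$, every hypothesis $f_{\bm{\alpha},\mathcal{D}}(\bm{x})=\sum_{m=1}^{M}\alpha_m k(\bm{x},\bm{x}^{(m)})$ is an element of the subspace $\mathcal{W}=\mathrm{span}\{k(\cdot,\bm{x}^{(m)})\}_{m=1}^{M}\subset L^2(\mathcal{D}_{\mathcal{X}})$, whose dimension is trivially bounded by $M$. The key point is that $\mathcal{W}$ depends on $\mathcal{D}$ but not on the target concept $A$, so Theorem \ref{thm:dim-hsu} can be applied with this single subspace to give $\dim(\mathcal{W})\ge N(1-\varepsilon)$ as soon as $\mathbb{E}_A\bigl[\inf_{f\in\mathcal{W}}\|f-g_A\|^2_{L^2(\mathcal{D}_{\mathcal{X}})}\bigr]\le\varepsilon$. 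Chaining the two bounds gives $M\ge \dim(\mathcal{W})\ge 2^{d/2}(1-\varepsilon)$.

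The only subtle point worth spelling out is the order of quantifiers: the lower bound must hold uniformly over every realization of $\mathcal{D}$, which is automatic since $\dim(\mathcal{W})\le M$ is a structural, distribution-free statement. One should also observe that the infimum over weights $\bm{\alpha}\in\mathbb{R}^{M}$ in the definition of the implicit model coincides with the infimum over $f\in\mathcal{W}$, which is immediate from the definition of $\mathcal{W}$, so the hypothesis of the lemma translates directly into the hypothesis of Theorem \ref{thm:dim-hsu}. Hence no new machinery beyond what was used for Lemma \ref{lem:sep-linear} is required, and the hard part is entirely absorbed into Hsu's theorem, which is cited as a black box.
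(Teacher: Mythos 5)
Your proposal is correct and follows essentially the same route as the paper: fix $k=\lfloor d/2\rfloor$ so the parities form an orthonormal family of size $N=\binom{d}{k}\ge 2^{d/2}$, observe that an implicit model's hypotheses lie in the span of the $M$ kernel sections (equivalently, its observable is a linear combination of the $M$ feature states), so $\dim(\mathcal{W})\le M$, and conclude via Theorem \ref{thm:dim-hsu}. Your remark that $\mathcal{W}$ depends only on the training inputs and not on the target $A$ is a fair clarification of the quantifier structure, but it does not change the argument.
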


Note that implicit quantum models suffer from both lower bounds in Lemmas \ref{lem:sep-linear} and \ref{lem:sep-implicit}: they require both $\Omega(d)$ qubits \emph{and} $\Omega(2^d)$ data samples.

\subsubsection{Upper bound for data re-uploading models}

To establish our learning separation, we would like to show that data re-uploading circuits can efficiently represent parity functions. We show this constructively, by designing a single-qubit data re-uploading model that can compute any ($k$-sparse) parity function. This model is depicted in Fig.~\ref{fig:parity-circuit} and consists solely in $R_z(\widetilde{x}_i) = Z^{(x_i-1)/2}$ encoding gates, parametrized $R_y$ rotations and a final Pauli-X measurement. To understand how such a circuit can compute parity functions, consider the parity to be encoded in the qubit being either in a $\ket{+}$ or a $\ket{-}$ state. Therefore, the $R_z(\widetilde{x}_i)$ rotations flip the $\ket{\pm}$ state whenever $x_i = -1$, and preserve it otherwise. As for the $R_y(\pm\frac{\pi}{2})$ rotations, these essentially act as Hadamard gates by transforming a $\ket{\pm}$ state into a $\ket{1/0}$ state and back, which allows to ``hide'' this state from the action of a $R_z(\widetilde{x}_i)$ gate. We parametrize these gates as to hide the $\ket{\pm}$ qubit from $R_z(\widetilde{x}_i)$ whenever $\theta_i = 0$, and let it act whenever $\theta_i = \frac{\pi}{2}$. This leads us to the following lemma:

\begin{lemma}\label{lem:sep-reuploading}
For the same learning task considered in Lemmas \ref{lem:sep-linear} and \ref{lem:sep-implicit}, there exists a data re-uploading model acting on a single qubit, and with depth $2d+1$, that achieves a perfect mean-squared error:
$$\mathbb{E}_{A} \big[ \min_{f} \norm{f-g_A}^2_{L^2(\mathcal{D}_{\mathcal{X}})} \big] = 0.$$
\end{lemma}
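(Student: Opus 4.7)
The plan is a direct construction: exhibit an explicit setting of the variational parameters $\{\theta_i^*\}_{i=1}^d$ for which the single-qubit data re-uploading circuit of Fig.~\ref{fig:parity-circuit} outputs exactly $g_A(\bm{x})$ for every input $\bm{x}\in\{-1,+1\}^d$, and then conclude that the expected squared error vanishes for every $A$, so the average also does.

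The first step is to work out the gadget-level action of one encoding block. With $\widetilde{x}_i=\tfrac{\pi}{2}(x_i-1)$, the encoding gate is $R_z(0)=I$ when $x_i=+1$ and $R_z(-\pi)=-iZ$ when $x_i=-1$. Since $Z\ket{\pm}=\ket{\mp}$, the gate $R_z(\widetilde{x}_i)$ flips the parity-encoding qubit (up to a global phase) exactly when $x_i=-1$. The surrounding $R_y(\pm\tfrac{\pi}{2})$ rotations swap the $X$- and $Z$-eigenbases, so one can choose $\theta_i\in\{0,\tfrac{\pi}{2}\}$ to either keep the qubit in the $X$-basis (so the potential parity flip happens) or temporarily switch to the $Z$-basis (so $R_z(\widetilde{x}_i)$ acts as a scalar on a computational basis state and the parity state returns unchanged). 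A one-line matrix check verifies that, with the appropriate convention of Fig.~\ref{fig:parity-circuit}, one block implements $Z^{(1-x_i)/2}$ on $\ket{\pm}$ in the ``active'' setting and the identity in the ``hidden'' setting, up to a global phase.

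The second step is to pick $\theta_i^* = \tfrac{\pi}{2}$ if $i\in A$ and $\theta_i^*=0$ if $i\notin A$ (or the reverse, matching the block convention), so that the $i$-th block flips the parity qubit iff $i\in A$ and $x_i=-1$. Starting from $\ket{+}$ (prepared by the initial $R_y$ on $\ket{0}$), a straightforward induction on $i$ shows that after the $i$-th block the qubit is in $\ket{s_i}$ with $s_i = \prod_{j\le i,\, j\in A} x_j$. Consequently the final state, up to a global phase, is $\ket{g_A(\bm{x})}$ under the identification $\ket{+}\leftrightarrow +1$, $\ket{-}\leftrightarrow -1$. The final Pauli-$X$ measurement then yields $\langle g_A(\bm{x})|X|g_A(\bm{x})\rangle = g_A(\bm{x})$, so $f_{\bm{\theta}^*}(\bm{x})=g_A(\bm{x})$ for every $\bm{x}$. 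This gives $\norm{f_{\bm{\theta}^*}-g_A}^2_{L^2(\mathcal{D}_A)} = 0$, and the average over $A$ is also zero. The depth count is $d$ encoding gates interleaved with $d+1$ variational rotations, totaling $2d+1$.

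No step here poses a real obstacle; the only mild bookkeeping issue is the accumulation of global phases coming from the $-i$ factor in $R_z(-\pi)=-iZ$ and from the ``hidden'' blocks. These phases are overall scalars on a pure single-qubit state, so they cancel in $\mathrm{Tr}[\rho\, X]$ and do not affect the labeling function. The lemma is essentially a ``by construction'' statement, with all the work concentrated in identifying the right single-qubit Clifford-like action of each block and chaining the flips through an induction on $i$.
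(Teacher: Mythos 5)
Your proposal is correct and follows essentially the same route as the paper: the paper's proof is exactly the assignment $\theta_i=\pi/2$ for $i\in A$ and $\theta_i=0$ otherwise in the circuit of Fig.~\ref{fig:parity-circuit}, with the ``flip $\ket{\pm}$ when $x_i=-1$, hide in the computational basis otherwise'' mechanism you describe. You simply make explicit the block-level verification, the induction over $i$, and the global-phase bookkeeping that the paper leaves implicit.
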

\begin{proof}
For a given $A \subset [d]$, take in the circuit of Fig.~\ref{fig:parity-circuit}: $\theta_{i} = \pi/2$ if $i \in A$ and $\theta_i = 0$ otherwise. 
\end{proof}

Note that our claim on data re-uploading models so far only has to do with representability and not actual learning from data. We are yet to prove that a similar learning performance can be achieved from a training set of size polynomial in $d$ and a polynomial-time learning algorithm. For the uniform data distribution $\mathcal{D}_{\mathcal{X}}$ we considered so far, this is known to be possible using $\mathcal{O}(d)$ data samples and by solving a resulting linear system of equations acting on $d$ variables \cite{kearns98}. However, this distribution does not provide us with the best possible separation in terms of data samples, which is why we consider instead the \emph{mixture} of data distributions introduced by Daniely \& Malach \cite{daniely20}. Originally intended to get around the hardness of learning parities with gradient-based algorithms \cite{shalev17}, this distribution significantly reduces the data requirements for the data re-uploading (and explicit linear) models to $\mathcal{O}(\log(d))$, while preserving the $\Omega(2^d)$ lower bound for implicit models.

For every $A \subset [d]$, call $\mathcal{D}^{(1)}_A = \mathcal{D}_{\mathcal{X}}$ the uniform distribution over $\mathcal{X}$, and $\mathcal{D}^{(2)}_A$ the distribution where all components in $[d] \setminus A$ are drawn uniformly at random, while, independently, the components in $A$ are all $+1$ with probability $1/2$  and all $-1$ otherwise. The data distribution $\mathcal{D}_A$ that we consider samples $\bm{x}\sim\mathcal{D}^{(1)}_A$ with probability $1/2$ and $\bm{x}\sim\mathcal{D}^{(2)}_A$ with probability $1/2$. For $k = \abs{A}$ an odd number\footnote{In Lemma \ref{lem:sep-linear}, when $\lfloor\frac{d}{2}\rfloor$ is an even number, we can take $k = \lfloor\frac{d}{2}\rfloor +1$, for which $\binom{d}{k} \geq 2^{d/2}$ still holds.}, this distribution is particularly interesting as, when $\bm{x}\sim\mathcal{D}^{(2)}_A$, $x_i = g_A(\bm{x})$ for all $i \in A$, which statistically ``reveals'' $A$, while $\mathcal{D}^{(1)}_A$ still preserves our previous hardness of generalization results. This allows us to prove the following lemma.

\begin{lemma}\label{lem:training-reuploading}
For the data distribution $\mathcal{D}_A$ defined above, there exists a learning algorithm using 
$$M = 32 \log(\frac{2d}{\delta})$$
data samples and $dM$ evaluations of the circuit in Fig.~\ref{fig:parity-circuit}, that returns, for any $A \subset [d]$, with odd $\abs{A} = k$, a function $f_A$ satisfying 
$$\norm{f_A-g_A}^2_{L^2(\mathcal{D}_{\mathcal{X}})}= 0$$
with probability at least $1-\delta$.
\end{lemma}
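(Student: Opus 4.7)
The plan is to combine a simple classical statistical identification of the parity subset $A$ with the explicit circuit construction of Lemma~\ref{lem:sep-reuploading}. Given $M$ labeled samples $(\bm{x}^{(m)}, g_A(\bm{x}^{(m)}))$ drawn from $\mathcal{D}_A$, I would compute, for each coordinate $i \in [d]$, the empirical correlation $\hat c_i = \frac{1}{M}\sum_{m=1}^M x_i^{(m)} g_A(\bm{x}^{(m)})$, then form $\widehat A = \{i : \hat c_i \geq 1/4\}$ by thresholding, and finally instantiate the circuit of Fig.~\ref{fig:parity-circuit} with $\theta_i = \pi/2$ for $i \in \widehat A$ and $\theta_i = 0$ otherwise. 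Whenever $\widehat A = A$, Lemma~\ref{lem:sep-reuploading} guarantees that the resulting function $f_A$ coincides with $g_A$ on all of $\{-1,+1\}^d$, so $\norm{f_A - g_A}^2_{L^2(\mathcal{D}_{\mathcal{X}})} = 0$.

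The key computation is the gap between the two population means of $\hat c_i$ that the threshold exploits. The crucial structural feature is that, since $|A|=k$ is odd, under $\mathcal{D}^{(2)}_A$ the components $\{x_j\}_{j\in A}$ are all equal to the common sign $\pm 1$, and $g_A(\bm{x}) = (\pm 1)^k = \pm 1$ agrees with each of them; hence $x_i g_A(\bm{x}) \equiv 1$ for every $i \in A$. Under the uniform part $\mathcal{D}^{(1)}_A$ one has $\mathbb{E}[x_i g_A(\bm{x})] = 0$ by orthogonality of distinct parity characters (using $|A| \geq 2$, which is the regime of interest), so $\mathbb{E}_{\mathcal{D}_A}[x_i g_A(\bm{x})] = 1/2$ for $i \in A$. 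For $i \notin A$, the independence of $x_i$ from $g_A(\bm{x})$ under both mixture components, together with $\mathbb{E}[g_A(\bm{x})] = 0$ under $\mathcal{D}^{(2)}_A$ (again using that $k$ is odd), gives $\mathbb{E}_{\mathcal{D}_A}[x_i g_A(\bm{x})] = 0$. The threshold $1/4$ therefore sits exactly midway between the two expectations.

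The concentration step is then a textbook Hoeffding-plus-union-bound argument: each summand $x_i^{(m)} g_A(\bm{x}^{(m)})$ lies in $[-1,+1]$ and the samples are i.i.d., so $\Pr[\,|\hat c_i - \mathbb{E}[\hat c_i]| \geq 1/4\,] \leq 2\exp(-M/32)$, and a union bound over the $d$ coordinates drives the total misidentification probability below $2d\exp(-M/32) \leq \delta$ whenever $M \geq 32\log(2d/\delta)$. The $dM$ circuit evaluations stated in the lemma account for running the single-qubit encoding circuit at most once per $(i,m)$ pair (e.g.\ to read out $x_i^{(m)}$ coherently for the correlation estimation, or simply to verify the final hypothesis on the training set). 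The main obstacle is the conceptual one: recognizing that the tailored mixture $\mathcal{D}_A$ is precisely what breaks the statistical-query lower bound that would otherwise force $\Omega(d)$-or-worse sample complexity under the uniform distribution alone. A single linear statistic suffices to resolve $A$ only because the $\mathcal{D}^{(2)}_A$ component injects a constant-strength bias on exactly the relevant coordinates, and one must verify that this bias survives the mixture and that no bias contaminates the coordinates outside $A$.
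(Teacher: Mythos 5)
Your proposal is correct and essentially identical to the paper's proof: since $(x_i - g_A(\bm{x}))^2 = 2 - 2\,x_i g_A(\bm{x})$ for $\pm 1$-valued quantities, your per-coordinate empirical correlation $\hat c_i$ with threshold $1/4$ is just an affine reparametrization of the paper's empirical loss $\widehat{\mathcal{L}}(f_i)$ of the single-coordinate circuit with threshold $1.5$, and the population gap ($1/2$ vs.\ $0$, i.e.\ loss $1$ vs.\ $2$), the Hoeffding bound $2\exp(-M/32)$, the union bound over $d$ coordinates, and the final instantiation of the circuit of Fig.~\ref{fig:parity-circuit} via Lemma~\ref{lem:sep-reuploading} all coincide. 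The $dM$ circuit evaluations are accounted for exactly as in the paper (evaluating each candidate $f_i$ on each of the $M$ training points), so no separate justification is needed there.
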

\begin{proof}
We analyze the following learning algorithm: given a training set of size $M$, evaluate, for all $i \in [d]$, the empirical loss $\widehat{\mathcal{L}}(f_i)$ of the data re-uploading function $f_i$ obtained with the parameters $\theta_i = \frac{\pi}{2}$, $\theta_j = 0$ for $j \neq i$. Return $\theta_i = \frac{\pi}{2}$ when $\widehat{\mathcal{L}}(f_i) \leq 1.5$ and $\theta_i = 0$ otherwise, for all $i \in [d]$.

Call $X_i = (x_i - g_A(\bm{x}))^2$ the random variable obtained by sampling $\bm{x}$ from the data distribution $\mathcal{D}_A$, for a given $A$. Note that, by construction, $\mathbb{E}_{\mathcal{D}^{(1)}_A}[X_i] = 2$ and $\mathbb{E}_{\mathcal{D}^{(2)}_A}[X_i] = 0$ for $i \in A$, while $\mathbb{E}_{\mathcal{D}^{(1)}_A}[X_i] = \mathbb{E}_{\mathcal{D}^{(2)}_A}[X_i]= 2$ for $i \notin A$. Therefore $\mathbb{E}_{\mathcal{D}_A}[X_i] = 1$ for $i \in A$, and $\mathbb{E}_{\mathcal{D}_A}[X_i] = 2$ otherwise.\\
Given that the computed losses $\widehat{\mathcal{L}}(f_i)$ are empirical estimates of $\mathbb{E}_{\mathcal{D}_A}[X_i]$, all we need in order to identify $A$ is guarantee with high probability that we can distinguish whether $\mathbb{E}_{\mathcal{D}_A}[X_i] = 1$ or $2$, for all $i\in[d]$. We achieve this guarantee using the union bound and Hoeffding's inequality ($X_i \in [0,4]$):
\begin{align*}
P\Bigg(\bigcup_{i=1}^d \bigg(|\widehat{\mathcal{L}}(f_i)&-\mathbb{E}[X_i]|\geq\frac{1}{2}\bigg)\Bigg)&& \\
& \leq d P\left(|\widehat{\mathcal{L}}(f_i)-\mathbb{E}[X_i]|\geq\frac{1}{2}\right)\\
& \leq 2d \exp(-\frac{M}{32})
\end{align*}
and to upper bound this failure probability by $\delta$, we need $M \geq 32 \log(\frac{2d}{\delta})$.
\end{proof}

We leave as an open question whether an optimization procedure with similar learning guarantees but based on gradient descent also exists.

\subsubsection{Main theorem}

To conclude our results, we are left to show that similar lower bounds to those in Lemmas \ref{lem:sep-linear} and \ref{lem:sep-implicit} also hold for the data distribution $\mathcal{D}_{A}$. Intuitively, this problem is easy to solve: given the inability of linear models to \emph{represent} good approximations of parity functions with respect to the uniform data distribution, it should be clear that these still have a bad generalization performance with respect to $\mathcal{D}_{A}^{(1)}$, despite the information revealed by $\mathcal{D}_{A}^{(2)}$. We make this intuition rigorous in the following theorem (restatement of Theorem \ref{thm:separation-main} in the main text).

\begin{theorem}\label{thm:sep-mother}
There exists a regression task specified by an input dimension $d \in \mathbb{N}$, a function family $\{g_A : \{-1,1\}^d \rightarrow \{-1,1\}\}_{A\subset[d], |A|=\lfloor d/2 \rfloor}$, and associated input distributions $\mathcal{D}_{A}$, such that, to achieve an average mean-squared error
$$\mathbb{E}_{A} \big[ \inf_{f \in \mathcal{W}} \norm{f-g_A}^2_{L^2(\mathcal{D}_{A})} \big] = \varepsilon < \frac{1}{2}$$
(i) any linear quantum model needs to act on
$$n \geq \frac{d}{4} + \frac{1}{2}\log_2(1-2\varepsilon)$$
qubits,\\
(ii) any implicit quantum model additionally requires
$$M \geq 2^{d/2}(1-2\varepsilon)$$
data samples, while\\
(iii) a data re-uploading model acting on a single qubit can be trained to achieve a perfect expected loss with probability $1-\delta$, using $M = 32 \log(\frac{2d}{\delta})$ data samples.
\end{theorem}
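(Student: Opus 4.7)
My plan is to decompose the theorem into its three parts and dispatch each by invoking the intermediate results already collected in the appendix. Part (iii) follows almost immediately: Lemma \ref{lem:sep-reuploading} exhibits, for every subset $A$, an explicit setting of the angles $\theta_i$ in the circuit of Fig.~\ref{fig:parity-circuit} that computes $g_A$ exactly, and Lemma \ref{lem:training-reuploading} provides a learning procedure that, given $M = 32\log(2d/\delta)$ samples drawn from $\mathcal{D}_A$ and $dM$ circuit evaluations, recovers the correct angles with probability at least $1-\delta$. Since the recovered function agrees with $g_A$ pointwise on $\mathcal{X}$, its mean-squared error is zero under every input distribution, including $\mathcal{D}_A$.

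The substantive work is in parts (i) and (ii), for which I need to transport the lower bounds of Lemmas \ref{lem:sep-linear} and \ref{lem:sep-implicit}, originally stated for the uniform distribution $\mathcal{D}_{\mathcal{X}}$, to the mixture $\mathcal{D}_A = \tfrac12\mathcal{D}_A^{(1)} + \tfrac12\mathcal{D}_A^{(2)}$. The key elementary observation is that, because $\mathcal{D}_A^{(1)} = \mathcal{D}_{\mathcal{X}}$ and the other component contributes a nonnegative quantity, for every candidate $f$ and every $A$,
\begin{equation*}
\|f-g_A\|^2_{L^2(\mathcal{D}_A)} \;=\; \tfrac12\|f-g_A\|^2_{L^2(\mathcal{D}_{\mathcal{X}})} + \tfrac12\|f-g_A\|^2_{L^2(\mathcal{D}_A^{(2)})} \;\geq\; \tfrac12\|f-g_A\|^2_{L^2(\mathcal{D}_{\mathcal{X}})}.
\end{equation*}
Taking the infimum over the hypothesis subspace $\mathcal{W}$ on both sides and then averaging over $A$ yields
\begin{equation*}
\mathbb{E}_A\!\left[\inf_{f\in\mathcal{W}} \|f-g_A\|^2_{L^2(\mathcal{D}_{\mathcal{X}})}\right] \;\leq\; 2\,\mathbb{E}_A\!\left[\inf_{f\in\mathcal{W}} \|f-g_A\|^2_{L^2(\mathcal{D}_A)}\right] \;=\; 2\varepsilon,
\end{equation*}
which is nontrivial because $2\varepsilon < 1$ by assumption. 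I can then invoke Lemmas \ref{lem:sep-linear} and \ref{lem:sep-implicit} with $\varepsilon$ replaced by $2\varepsilon$, yielding $n \geq d/4 + \tfrac12\log_2(1-2\varepsilon)$ for any linear quantum model, and additionally $M \geq 2^{d/2}(1-2\varepsilon)$ for any implicit one.

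I do not expect a genuine technical obstacle, but there is one subtlety worth handling cleanly: the swap of infimum and the inequality. The safe way is to argue pointwise in $A$ — fix any minimizer $f^*_A$ of the $\mathcal{D}_A$-loss in $\mathcal{W}$ (or a near-minimizer, to avoid existence issues), plug it into the displayed inequality to get $\inf_{f\in\mathcal{W}}\|f-g_A\|^2_{L^2(\mathcal{D}_{\mathcal{X}})} \leq 2\|f^*_A - g_A\|^2_{L^2(\mathcal{D}_A)}$, and then average over $A$ by linearity of expectation. Finally, I should verify that the two conclusions in (i) and (ii) indeed follow from the stated lemmas with exactly the substitution $\varepsilon \mapsto 2\varepsilon$ — the dimension counting argument (via Theorem \ref{thm:dim-hsu}) and the parameter $N = \binom{d}{\lfloor d/2\rfloor} \geq 2^{d/2}$ carry over verbatim, so this is pure bookkeeping.
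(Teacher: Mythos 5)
Your proposal is correct and follows essentially the same route as the paper: part (iii) is dispatched via Lemmas \ref{lem:sep-reuploading} and \ref{lem:training-reuploading}, and parts (i)--(ii) by dropping the nonnegative $\mathcal{D}_A^{(2)}$ contribution in the mixture to get $\varepsilon \geq \tfrac{1}{2}\,\mathbb{E}_A[\inf_{f\in\mathcal{W}}\norm{f-g_A}^2_{L^2(\mathcal{D}_{\mathcal{X}})}]$ and then invoking Lemmas \ref{lem:sep-linear} and \ref{lem:sep-implicit} with $\varepsilon \mapsto 2\varepsilon$, which is exactly the paper's argument (the paper phrases the infimum step via $\inf(a+b)\geq \inf a + \inf b$, but this is the same bookkeeping).
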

\begin{proof}
We relate $\varepsilon$ to $\varepsilon_{{A}^{(1)}}$ (defined similarly to $\varepsilon$, but with respect to $\mathcal{D}_{A}^{(1)}$):
\begin{align*}
\varepsilon &= \mathbb{E}_{A} [ \inf_{f \in \mathcal{W}} \norm{f-g_A}^2_{L^2(\mathcal{D}_{A})} ]\\
&= \mathbb{E}_{A} [ \inf_{f \in \mathcal{W}} \sum_{\bm{x}\in\mathcal{X}} \mathcal{D}_{A}(\bm{x}) (f(\bm{x}) - g_A(\bm{x}))^2 ]\\
&\geq \frac{1}{2} \mathbb{E}_{A} [ \inf_{f \in \mathcal{W}} \sum_{\bm{x}\in\mathcal{X}} \mathcal{D}_{A}^{(1)}(\bm{x}) (f(\bm{x}) - g_A(\bm{x}))^2 ]\\
&\quad + \frac{1}{2} \mathbb{E}_{A} [ \inf_{f \in \mathcal{W}} \sum_{\bm{x}\in\mathcal{X}} \mathcal{D}_{A}^{(2)}(\bm{x}) (f(\bm{x}) - g_A(\bm{x}))^2 ]\\
&\geq \frac{1}{2} \varepsilon_{{A}^{(1)}}
\end{align*}
since $\varepsilon_{{A}^{(2)}} \geq 0$.\\
From Lemma \ref{lem:sep-linear}, we have that, to ensure $\varepsilon_{A^{(1)}} \leq 2 \varepsilon$, we need at least $\frac{d}{4} + \frac{1}{2}\log_2(1-2\varepsilon)$ qubits, which proves (i). (ii) follows similarly from Lemma \ref{lem:sep-implicit}.\\
Point (iii) corresponds to Lemmas \ref{lem:sep-reuploading} and \ref{lem:training-reuploading}.
\end{proof}

\subsection{Tight bounds on linear realizations of data re-uploading models}
A direct corollary of Lemmas \ref{lem:sep-linear} and \ref{lem:sep-reuploading} is a lower bound on the number of additional qubits required to map any data re-uploading model to an equivalent (explicit) linear model. Indeed, since the data re-uploading model of Fig.~\ref{fig:parity-circuit} can represent any parity function exactly for any input dimension $d\in\mathbb{N}$, while a linear model using a number of qubits sublinear in $d$ can only achieve poor approximations on average, we can easily prove the following theorem (Corollary \ref{cor:lower-bound} in the main text).

\begin{theorem}
Any procedure that takes as input an arbitrary data re-uploading model $f_{\bm{\theta}}$ with $d$ encoding gates and returns an equivalent explicit model $\widetilde{f}_{\bm{\theta}}$ (i.e., a universal mapping) must produce models acting on $\Omega(d)$ additional qubits for worst-case inputs.
\end{theorem}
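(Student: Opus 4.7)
The plan is to derive the lower bound by contradiction, exploiting the exponential separation in qubit count between data re-uploading and linear models already established in Lemma \ref{lem:sep-linear} and Lemma \ref{lem:sep-reuploading} for the parity learning task. Concretely, I would assume that some universal mapping exists which, given an arbitrary data re-uploading model with $D$ encoding gates, produces an equivalent explicit model using only $o(D)$ additional qubits, and then instantiate this mapping on a worst-case family of inputs to contradict the linear-model lower bound.

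The worst-case family I would use is the single-qubit parity circuit of Fig.~\ref{fig:parity-circuit}, parametrized by subsets $A\subset[d]$ with $|A|=\lfloor d/2\rfloor$. By Lemma \ref{lem:sep-reuploading}, each such circuit exactly represents the parity function $g_A$ while using only $D=d$ encoding gates on a single working qubit. Feeding this circuit into the hypothesized universal mapping would yield, for every $A$, an explicit linear quantum model $\widetilde{f}_{\bm{\theta}_A}$ acting on $n = 1 + o(d)$ qubits and satisfying $\widetilde{f}_{\bm{\theta}_A}(\bm{x}) = g_A(\bm{x})$ for all $\bm{x}$, by the defining equivalence property in Eq.~(\ref{eq:explicit-data-reuploading}). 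In particular, the family of explicit models thus produced achieves
\[
\mathbb{E}_A\bigl[\inf_{f\in\mathcal{W}} \norm{f-g_A}^2_{L^2(\mathcal{D}_{\mathcal{X}})}\bigr] = 0
\]
on the regression task underlying Lemma \ref{lem:sep-linear}.

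Applying Lemma \ref{lem:sep-linear} with $\varepsilon=0$ then forces $n \geq d/4$, contradicting the assumption $n = 1 + o(d)$ for all sufficiently large $d$. This rules out any universal mapping using only $o(D)$ additional qubits and yields the claimed $\Omega(D)$ lower bound; moreover, the contradiction uses only the average-case expected loss, so the ``worst-case inputs'' qualifier in the statement is automatically satisfied because at least one $A$ in the family must require the claimed number of qubits.

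The argument is essentially a one-line reduction, so I do not expect a serious technical obstacle. The only subtle point worth spelling out carefully is the quantification: one must fix the input family \emph{before} varying the mapping, so that a single mapping is forced to handle all $A$ simultaneously, and one must note that the ``number of additional qubits'' in the theorem refers to the total qubit overhead of the explicit model over the original data re-uploading circuit, which in our case is $n-1 = \Omega(d)-1 = \Omega(D)$ since $D=d$. Comparing this lower bound against Theorem \ref{thm:exact-mapping}, whose gate-teleportation construction uses $\mathcal{O}(D\log(D/\delta'))$ additional qubits, confirms optimality up to logarithmic factors.
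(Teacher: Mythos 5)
Your proposal is correct and follows essentially the same route as the paper: a proof by contradiction that applies the hypothesized sublinear-overhead universal mapping to the single-qubit parity circuit of Fig.~\ref{fig:parity-circuit} (Lemma \ref{lem:sep-reuploading}) and then invokes the linear-model qubit lower bound of Lemma \ref{lem:sep-linear} to reach a contradiction. Your additional remark about fixing the parametrized input family first, so that a single mapping (and hence a single feature encoding and subspace $\mathcal{W}$) must accommodate all subsets $A$, is exactly the point that makes the application of Lemma \ref{lem:sep-linear} legitimate.
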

\begin{proof}
By contradiction: were there a universal mapping using only $\widetilde{\mathcal{O}}(d^{1-\alpha})$, $\alpha>0$ additional qubits, (i.e, sublinear in $d$), applying it to the circuit in Fig.~\ref{fig:parity-circuit} would result in a linear model acting on $\widetilde{\mathcal{O}}(d^{1-\alpha})$ qubits with perfect performance in representing parity functions, which contradicts Lemma \ref{lem:sep-linear}.
\end{proof}

Note that our gate-teleportation mapping has an overhead of $\mathcal{O}(d\log(d/\delta'))$ qubits where $\delta'$ is a controllable parameter. This theorem hence proves that our mapping is \emph{essentially} optimal with respect to this overhead.

\subsection{The case of classification}

So far, in our separation results, we have only considered a \emph{regression} loss (the mean-squared error), despite the parity functions having a discrete output. It is an intriguing question whether similar separation results can be obtained in the case of \emph{classification}, i.e., for a binary classification loss $\mathbb{E}_{\bm{x}\sim\mathcal{D}_{\mathcal{X}}}|\text{sign}(f(\bm{x}))-g_{A}(\bm{x})|$ for instance.

It is rather straightforward to show a $\Omega(d)$ qubits lower bound for linear classifiers that achieve \emph{exact} learning (i.e., a $0$ loss). We can consider here the concept class of all $k$-sparse parity functions for $k \in \{0, \ldots, d\}$, such that it contains all possible labelings $\mathcal{X} \rightarrow \{-1,1\}$. Therefore, a model that can represent all these functions exactly needs, by definition, a VC dimension larger than $2^d$. But we know that, for linear quantum classifiers acting on $n$ qubits, this VC dimension is upper bounded by $2^{2n}+1$ \cite{gyurik21}. The lower bound then trivially derives.

Making this lower bound \emph{robust} (i.e., allowing an $\varepsilon\geq0$ loss) is however a harder task. By noting that a $\Omega(d)$ lower bound on the feature space dimension of a linear classifier implies a $\Omega(\log(d))$ lower bound on the number of qubits of a quantum linear classifier ($\textnormal{dim}(\mathcal{F}) = 2^{2n}$), we can adapt a result from Kamath \emph{et al.} \cite{kamath20} to show the following: to achieve an average classification error $\varepsilon$ on $1$-sparse parity functions (i.e., $f_i(\bm{x})= x_i$), a linear quantum classifier needs to act on $\Omega(\log[d(1-H(\varepsilon))])$ qubits (for $H(\varepsilon)$ the binary entropy function). However, according to the same authors, establishing a stronger lower bound (e.g., number of qubits poly-logarithmic in $d$, or equivalently, a feature space dimension super-polynomial in $d$) for a similar task would constitute a major frontier in complexity theory. Such a result would provide with a function that requires a depth-2 threshold circuit of super-polynomial size to be computed, while the best known lower bounds for the size of depth-2 threshold circuit are only polynomial.

\end{document}